\def\maxwidth{\ifdim\Gin@nat@width>\linewidth\linewidth\else\Gin@nat@width\fi}
\def\maxheight{\ifdim\Gin@nat@height>\textheight\textheight\else\Gin@nat@height\fi}
\def\fps@figure{htbp}
\providecommand{\tightlist}{%
  \setlength{\itemsep}{0pt}\setlength{\parskip}{0pt}}
\NewDocumentCommand\citeproctext{}{}
\NewDocumentCommand\citeproc{mm}{%
  \begingroup\def\citeproctext{#2}\cite{#1}\endgroup}
 \let\@cite@ofmt\@firstofone
 \def\@biblabel#1{}
 \def\@cite#1#2{{#1\if@tempswa , #2\fi}}
\newlength{\cslhangindent}
\newlength{\csllabelwidth}
\newenvironment{CSLReferences}[2] 
 {\begin{list}{}{%
  \setlength{\itemindent}{0pt}
  \setlength{\leftmargin}{0pt}
  \setlength{\parsep}{0pt}
  \ifodd #1
   \setlength{\leftmargin}{\cslhangindent}
   \setlength{\itemindent}{-1\cslhangindent}
  \fi
  \setlength{\itemsep}{#2\baselineskip}}}
 {\end{list}}
\title{Charting the Parrot's Song: A Maximum Mean Discrepancy Approach
to Measuring AI Novelty, Originality, and Distinctiveness}
\author{}
\date{\vspace{-2.5em}}
\begin{document}
\maketitle

\quotingsetup{font={itshape}, leftmargin=2em, rightmargin=2em, vskip=1ex}

\newtheorem{proposition}{Proposition} 

\vspace*{\fill}

\begin{center}
Anirban Mukherjee

Hannah Hanwen Chang

\bigskip

11 April, 2025
\end{center}

\vspace*{\fill}

\noindent \hrulefill

\noindent Anirban Mukherjee
(\href{mailto:anirban@avyayamholdings.com}{\nolinkurl{anirban@avyayamholdings.com}})
is Principal at Avyayam Holdings. Hannah H. Chang
(\href{mailto:hannahchang@smu.edu.sg}{\nolinkurl{hannahchang@smu.edu.sg}};
corresponding author) is Associate Professor of Marketing at the Lee
Kong Chian School of Business, Singapore Management University. This
research was supported by the Ministry of Education (MOE), Singapore,
under its Academic Research Fund (AcRF) Tier 2 Grant,
No.~MOE-T2EP40221-0008.

\newpage

\begin{center} 
\noindent \textbf{Abstract}
\end{center}

\noindent Current intellectual property frameworks struggle to evaluate
the novelty of AI-generated content, relying on subjective assessments
ill-suited for comparing effectively infinite AI outputs against prior
art. This paper introduces a robust, quantitative methodology grounded
in Maximum Mean Discrepancy (MMD) to measure distributional differences
between generative processes. By comparing entire output distributions
rather than conducting pairwise similarity checks, our approach directly
contrasts creative processes---overcoming the computational challenges
inherent in evaluating AI outputs against unbounded prior art corpora.
Through experiments combining kernel mean embeddings with
domain-specific machine learning representations (LeNet-5 for MNIST
digits, CLIP for art), we demonstrate exceptional sensitivity: our
method distinguishes MNIST digit classes with 95\% confidence using just
5--6 samples and differentiates AI-generated art from human art in the
AI-ArtBench dataset (n=400 per category; p\textless0.0001) using as few
as 7-10 samples per distribution despite human evaluators' limited
discrimination ability (58\% accuracy). These findings challenge the
``stochastic parrot'' hypothesis by providing empirical evidence that AI
systems produce outputs from semantically distinct distributions rather
than merely replicating training data. Our approach bridges technical
capabilities with legal doctrine, offering a pathway to modernize
originality assessments while preserving intellectual property law's
core objectives. This research provides courts and policymakers with a
computationally efficient, legally relevant tool to quantify AI
novelty---a critical advancement as AI blurs traditional authorship and
inventorship boundaries.

\begin{center}\rule{0.5\linewidth}{0.5pt}\end{center}

\noindent Keywords: Novelty, Originality, Distinctiveness, Artificial
Intelligence, Copyright, Patent, Intellectual Property Law.

\smallskip

\newpage

\singlespacing
\setcounter{tocdepth}{4}
\tableofcontents

\newpage

\doublespacing

\section{Introduction}\label{introduction}

\emph{\noindent "Because computers today, and for proximate tomorrows, cannot themselves formulate creative plans or `conceptions' to inform their execution of expressive works, they lack the initiative that characterizes human authorship. The computer scientist who succeeds at the task of `reduc[ing] [creativity] to logic' does not generate new `machine' creativity—she instead builds a set of instructions to codify and simulate `substantive aspect[s] of human [creative] genius,' and then commands a computer to faithfully follow those instructions. Even the most sophisticated generative machines proceed through processes designed entirely by the humans who program them, and are therefore closer to amanuenses than to true `authors.'"}

\noindent \hfill --- Ginsburg and Budiardjo
(\citeproc{ref-ginsburg2019authors}{2019}), p.~349.

\emph{\noindent "Notwithstanding its age and the technological advances that have occurred since its utterance, Lovelace’s critique remains credible. Even though today’s computers are exponentially more powerful than their early ancestors in terms of memory and processing, they still rely on humans in the first instance to dictate the rules according to which they perform. Like the photographer standing behind the camera, an intelligent programmer or team of programmers stands behind every artificially intelligent machine. People create the rules, and machines obediently follow them—doing, in Lovelace’s words, only whatever we order them to perform, and nothing more."}

\noindent \hfill --- Bridy (\citeproc{ref-bridy2012coding}{2012}),
p.~10.

\emph{\noindent "Use of texts to train LLaMA to statistically model language and generate original expression is transformative by nature and quintessential fair use—much like Google’s wholesale copying of books to create an internet search tool was found to be fair use in Authors Guild v. Google, Inc., 804 F.3d 202 (2d Cir. 2015)."}

\noindent \hfill --- \emph{R. Kadrey, S. Silverman, \& C. Golden v. Meta
Platforms, Inc.}, No.~3:23-cv-03417-VC.

\smallskip

The concepts of novelty, originality, and distinctiveness serve as
domain-specific criteria across various forms of intellectual property
(IP) law, each providing a framework for assessing how new creations
relate to existing knowledge. Patent law requires inventions to be
``novel'' and ``non-obvious'' compared to prior art.\footnote{These
  requirements for patentability are codified in Title 35 of the U.S.
  Code, primarily in 35 U.S.C. § 102 (novelty) and § 103
  (non-obviousness).} Trademark law mandates that marks exhibit
``distinctiveness,'' meaning they must sufficiently differentiate the
associated goods or services in the marketplace.\footnote{Trademark
  distinctiveness is governed by the Lanham Act, 15 U.S.C. § 1051
  \emph{et seq.}, and is often analyzed along a spectrum from generic to
  arbitrary or fanciful, potentially including acquired distinctiveness
  (secondary meaning).} Copyright law requires ``originality,'' meaning
independent creation with at least a minimal degree of
creativity.\footnote{Copyright protection under 17 U.S.C. § 102(a)
  extends to ``original works of authorship,'' a standard requiring both
  independent creation and a minimal level of creativity.}

While these concepts operate differently within their respective
domains, they share a common function: measuring the degree to which new
creations depart from prior works. Foundational cases---such as
\emph{Graham v. John Deere Co.} (383 U.S. 1, 1966) for patent novelty,
\emph{Abercrombie \& Fitch Co.~v. Hunting World, Inc.} (537 F.2d 4, 2d
Cir. 1976) for trademark distinctiveness, and \emph{Feist Publications,
Inc.~v. Rural Telephone Service Co.} (499 U.S. 340, 1991) for copyright
originality---along with leading treatises (e.g.,
\citeproc{ref-chisum2022patents}{Chisum 2022} on Patents;
\citeproc{ref-mccarthy2025trademarks}{McCarthy 2025} on Trademarks;
\citeproc{ref-nimmer2023copyright}{Nimmer and Nimmer 2023} on
Copyright), underscore the importance of effectively measuring the
relationship between new creations and existing works. In patent law,
this involves comparing new inventions to the existing body of knowledge
(prior art); in copyright and trademark law, it involves comparing
independent creative works to existing works. Across these domains,
questions of comparative distinctiveness---broadly understood as the
measurable differentiation between a new creation and existing
knowledge, or between two independent works---often form the crux of
legal disputes.

This established principle of assessing comparative distinctiveness,
however, faces unprecedented challenges due to recent advances in
artificial intelligence (AI). This is particularly evident in ongoing
debates surrounding AI authorship. Currently, the U.S. Copyright Office,
along with many international jurisdictions, maintains that works
generated solely by AI---without human authorship---are not eligible for
copyright protection (\citeproc{ref-guadamuz2016monkey}{Guadamuz
2016}).\footnote{This position aligns with the traditional view of such
  systems as mere tools or ``amanuenses'' incapable of independent
  creation. \emph{See} U.S. Copyright Office, \emph{Compendium of U.S.
  Copyright Office Practices} § 313.2 (3d ed.~2021). The Office
  reiterated this stance in recent guidance, emphasizing that copyright
  protection requires works to be the product of human authorship and
  refusing registration for works where AI contributions are not the
  result of human creative control or where the human contribution lacks
  sufficient originality. \emph{See} U.S. Copyright Office,
  \emph{Copyright Registration Guidance: Works Containing Material
  Generated by Artificial Intelligence}, 88 Fed. Reg. 16190 (Mar.~16,
  2023).} This stance was notably applied in the case of the AI-assisted
comic \emph{Zarya of the Dawn}, where registration for the work as a
whole was refused because the human user's text prompts were deemed
insufficient to constitute the necessary creative input for authorship
of the AI-generated images.\footnote{\emph{See} U.S. Copyright Office,
  Letter re: Zarya of the Dawn (Registration \# VAu001480196) (Feb.~21,
  2023) (concluding that the AI-generated images were not products of
  human authorship, while granting protection to the text and the
  selection/arrangement of elements authored by Kristina Kashtanova).}
\footnote{This stance contrasts with approaches in some other
  jurisdictions; for instance, Chinese courts have reached differing
  conclusions, sometimes granting copyright protection based on the
  human team's role in selecting data and parameters that guided the
  AI's output, effectively recognizing the human orchestration of the
  generative process. For instance, compare \emph{Shenzhen Tencent
  Computer System Co., Ltd.~v. Shanghai Yingxun Technology Co., Ltd.},
  {[}2019{]} Yue 0305 Min Chu 14010 (Shenzhen Nanshan Dist. People's Ct.
  Dec.~24, 2019) (granting protection based on human selection and
  arrangement) with \emph{Beijing Film Law Firm v. Beijing Baidu Netcom
  Science \& Technology Co., Ltd.}, {[}2018{]} Jing 0491 Min Chu No.~239
  (Beijing Internet Ct. Apr.~25, 2019) (denying protection, requiring
  natural person creation). For discussion, see Wan and Lu
  (\citeproc{ref-wan2021copyright}{2021}).} Although legal debates and
lawsuits related to AI-generated content continue to evolve across
intellectual property domains\footnote{E.g., \emph{Thaler v.
  Perlmutter}, No.~22-1564 (BAH) (D.D.C. Aug.~18, 2023) (denying patent
  inventorship to AI), and European Patent Office (EPO) Legal Board of
  Appeal, Case J 8/20 (Dec.~21, 2021) (same).}, the broad consensus
remains that AI systems, in their current form, cannot satisfy the
traditional requirements of human authorship or inventorship.\footnote{\emph{Also
  see}, Sun (\citeproc{ref-sun2021redesigning}{2021}).}

This perspective aligns with the longstanding view---tracing back to Ada
Lovelace---that without human authorship, a creative work cannot meet
the threshold of originality required by copyright law
(\citeproc{ref-bridy2012coding}{Bridy 2012};
\citeproc{ref-schafer2015fourth}{Schafer et al. 2015}). As Ginsburg and
Budiardjo (\citeproc{ref-ginsburg2019authors}{2019}) forcefully state,
even the most sophisticated AI systems ``lack the initiative that
characterizes human authorship'' and are ``closer to amanuenses than to
true `authors'\,'' (p.~349). They conceive authorship as resting on two
pillars: a mental step (the conception of a work) and a physical step
(the execution of a work). They exclude AI from the former as current AI
systems lack genuine cognitive agency or motivation, and from the latter
because they view AI outputs as strictly determined by human-programmed
instructions, making AI systems closer to amanuenses than authors. Thus,
they conclude, AI systems fail to achieve originality in either
conception or execution.

However, there are grounds to expect AI outputs to be novel. Because AI
systems necessarily combine and interpolate between their training
points, their outputs are almost always structurally
distinct.\footnote{In a fundamental mathematical sense, almost
  everything modern generative AI systems produce (with probability
  approaching one) is novel, as these systems operate based on
  probabilistic relationships among elements (e.g., words, pixels) and
  concepts, rather than by retrieving pre-existing content. For
  instance, in text generation, large language models interpolate
  between words, where all inputs and prior outputs define the
  probabilities used to sample the next word. Similarly, diffusion and
  flow models map points from a high-dimensional continuous sample space
  to images, such that prior training examples correspond only to
  discrete points within that space.} \footnote{\emph{See} Degli Esposti
  et al. (\citeproc{ref-degli2020use}{2020}) for examples of AI systems
  whose ``creativity'' is not based on pre-existing works.} As each
output element is recursively fed back into the model, the resulting
outputs naturally diverge from their original sources, occasionally
losing their original meaning or even creating entirely new
``facts''---a phenomenon known as hallucination or confabulation
(\citeproc{ref-Ji2023}{Ji et al. 2023};
\citeproc{ref-mukherjee2023managing}{Mukherjee and Chang 2023}). Indeed,
this perspective is central to Meta's defense in \emph{R. Kadrey, S.
Silverman, \& C. Golden v. Meta Platforms, Inc.}, No.~3:23-cv-03417-VC:
if an AI's training inputs serve merely as points for interpolation,
then its outputs may often be \emph{functionally}
transformative\footnote{The concept of transformative use, where a new
  work alters the original with new expression, meaning, or message, is
  central to fair use analysis in copyright law. \emph{See, e.g.,
  Campbell v. Acuff-Rose Music, Inc.}, 510 U.S. 569 (1994); \emph{Cariou
  v. Prince}, 714 F.3d 694 (2d Cir. 2013). Applying this concept to AI
  outputs trained on copyrighted data is a key issue in ongoing
  litigation.} rather than direct reflections of its training data, and
therefore not necessarily \emph{functionally} derivative.\footnote{The
  term ``functionally derivative'' is used here to describe AI outputs
  that operationally resemble derivative works as defined in 17 U.S.C. §
  101, which are works ``based upon one or more preexisting works''
  through recasting, transformation, or adaptation. However, this
  characterization does not imply legal status. Under current U.S.
  copyright law, derivative works require human authorship and
  intentional adaptation or transformation of preexisting works (17
  U.S.C. § 106(2)). AI systems, lacking human authorship and the
  requisite intent (\emph{mens rea}), cannot legally create derivative
  works. The U.S. Copyright Office explicitly maintains that copyright
  protection requires human authorship. \emph{See} U.S. Copyright
  Office, \emph{Compendium of U.S. Copyright Office Practices} § 313.2
  (3d ed.~2021); \emph{see also Thaler v. Perlmutter}, No.~22-1564 (BAH)
  (D.D.C. Aug.~18, 2023). Thus, the term ``functionally derivative''
  emphasizes operational similarity without conferring legal authorship
  or infringement capacity upon the AI itself.} Evaluating such claims
requires robust methods capable of assessing the \emph{degree} of
distinctiveness between an AI's output distribution and the distribution
of prior art.

\subsection{Assessing Novelty, Originality, and
Distinctiveness}\label{assessing-novelty-originality-and-distinctiveness}

While the lack of genuine cognitive agency (conception) in AI remains
largely undisputed at present, we argue that the lack of originality in
AI \emph{execution} is often more assumed than empirically measured---in
part due to the absence of a suitable empirical measure, a gap this
paper seeks to address. This challenge is particularly acute in legal
contexts, where human contribution is paramount. For instance, recent
guidance from the U.S. Patent and Trademark Office (USPTO) on
AI-assisted inventions reaffirms that only natural persons can be
inventors, but clarifies that AI assistance does not preclude
patentability if a human provides a ``significant
contribution.''\footnote{See \emph{Inventorship Guidance for AI-Assisted
  Inventions}, 89 Fed. Reg. 10043 (Feb.~13, 2024). The guidance
  emphasizes that the inventorship analysis must focus on human
  contributions and applies the \emph{Pannu} factors (\emph{Pannu v.
  Iolab Corp.}, 155 F.3d 1344, 1351 (Fed. Cir. 1998)) to determine if a
  human's contribution to the conception of the AI-assisted invention
  was significant.} This legal framework, while necessary for
determining inventorship, relies on assessing factors such as the
human's contribution to conception and whether it was ``not
insignificant in quality.'' Such assessments often involve qualitative
judgments about the \emph{human's actions} rather than a direct
quantitative measure of the \emph{output's distinctiveness}.
Furthermore, traditional qualitative assessments of novelty across IP
domains rely on subjective judgments about a work's originality,
significance, and impact. Such judgments can vary widely, encompassing
everything from incremental improvements to groundbreaking innovations,
leaving ample room for selective interpretation and reinforcing existing
biases regarding AI's capacity for genuine innovation.\footnote{It is
  noteworthy that trademark law diverges from copyright and patent law
  in this regard; there is currently no specific U.S. statute or
  regulation requiring human ``creation'' for a trademark, as the focus
  remains on the mark's use by a legal person to identify source.
  However, the capacity of generative AI to easily create numerous
  potential marks raises significant practical concerns. This ease of
  generation risks an oversaturation of the trademark landscape,
  potentially diluting the ability of any mark to serve its essential
  function as a unique source identifier. Therefore, evaluating the
  differentiation between a mark or a set of marks generated by AI and
  the vast field of existing marks (human or otherwise) becomes an
  increasingly complex and vital task. This situation underscores the
  critical need for robust methods to assess comparative
  distinctiveness, as explored in this paper.}

Moreover, traditional quantitative metrics of novelty, originality, and
distinctiveness in natural language processing---such as cosine
similarity---typically rely on pairwise comparisons, which are direct
evaluations between individual works, rather than assessing differences
between the underlying generative (creative) processes
(\citeproc{ref-vsavelka2022legal}{Šavelka and Ashley 2022}). For
instance, in visual art, these quantitative measures might compare
individual paintings---one painting by an artist against another
painting by a different artist---to gauge novelty. However, they cannot
directly evaluate the novelty of one painter's \emph{entire} creative
process relative to another's. As a result, attempts to capture
process-to-process novelty comparisons using existing methods inevitably
depend either on qualitative judgments or on \emph{ad hoc} aggregations
of pairwise distance metrics (such as the mean or maximum of the
pairwise similarities between the outputs of two artists). This approach
lacks a principled and consistent quantitative basis.

Measuring the \emph{difference} between the generative process of an AI
and the human creative processes underlying prior art\footnote{Here and
  subsequently, ``prior art''---while technically a patent law term---is
  used more broadly to denote the relevant collection of existing
  domain-specific items (e.g., prior inventions, existing copyrighted
  works, registered trademarks). This generalized usage facilitates a
  consistent discussion of comparing new creations to an existing corpus
  across different IP fields.} is particularly essential for several
reasons. It has always been impractical to comprehensively collect and
analyze the entirety of human-generated prior art---a longstanding
challenge even in traditional assessments of novelty. AI introduces an
additional complication: because the generative capacity of AI is
effectively infinite, comparing an AI's outputs to prior art requires an
infinite number of comparisons. Furthermore, as AI-generated outputs
themselves become part of prior art, both the body of prior art and the
set of AI outputs expand indefinitely, rendering traditional pairwise
comparisons intractable. Moreover, as these sets expand, even genuinely
innovative AI outputs will increasingly coincide with prior human or AI
creations purely by chance, misleadingly suggesting that the AI merely
replicates existing content (\citeproc{ref-villasenor2023ten}{Villasenor
2023}). These issues further limit the utility of traditional
quantitative metrics.

\subsection{Maximum Mean Discrepancy
(MMD)}\label{maximum-mean-discrepancy-mmd}

Consistent with the need to measure novelty and aligned with calls for a
realistic understanding of AI's current capabilities and limitations
(\citeproc{ref-surden2018artificial}{Surden 2018}), we propose using
Maximum Mean Discrepancy (MMD) as the basis for a quantitative measure
of novelty. MMD is a kernel-based statistical approach designed to
measure the distance between two probability distributions---not by
comparing individual samples from these distributions, but by examining
their collective properties.\footnote{Rather than making direct pairwise
  comparisons between individual samples, MMD evaluates whether samples
  drawn from one distribution can, as a group, be reliably distinguished
  from samples drawn from another distribution. This approach
  capitalizes on systematic differences across the entire sample space,
  rather than idiosyncratic points of comparison.}

This approach is particularly valuable in the context of AI-generated
content, where individually comparing every possible AI output against
the vast body of existing prior art is impractical. Instead, MMD allows
us to ask a simpler, more practical question: Do the outputs from an AI
system, viewed collectively, tend to resemble the kinds of works already
produced by humans, or do they differ in meaningful ways? If an AI
system merely replicates or closely imitates its training data (the
prior art), its outputs, taken together, will appear very similar to
that data. Conversely, if the AI system produces genuinely novel
outputs, its outputs, taken together, will differ significantly. By
focusing on these distribution-level differences rather than individual
comparisons, MMD provides a robust and practical way to assess whether
an AI's creative process is meaningfully distinct from human creative
processes.

This shift in approach offers several advantages. First, by measuring
novelty holistically at the process level, we address the concern that
even genuinely innovative AI systems might occasionally produce outputs
that \emph{coincidentally} resemble prior art, simply due to the
vastness of both sets; by evaluating the \emph{overall tendencies} of
generative processes rather than individual outputs, our method
accommodates similarities (and differences) arising purely by chance.
Second, although we aim to determine whether a potentially infinite set
of works (e.g., AI outputs) differs from another potentially infinite
set (e.g., prior art), our method must remain practical and estimable
using only finite samples from each distribution. MMD is particularly
well-suited to this scenario, as it provides a statistically robust
approach for estimating distribution-level differences from relatively
small sample sizes.\footnote{Our empirical work shows that as few as 5
  samples from each distribution may suffice to ensure robust inference.}
Consequently, our method does not require exhaustive knowledge of all
possible AI outputs or a complete catalog of prior art.

To ensure our metric captures \emph{semantic} information, we leverage
machine learning embeddings---mathematical functions that map
unstructured data, such as text or images, into high-dimensional vector
spaces (\citeproc{ref-chalkidis2019deep}{Chalkidis and Kampas 2019}).
Similar embedding-based approaches have been successfully applied to
quantify distinctiveness in intellectual property contexts, particularly
in assessing trademark registrability
(\citeproc{ref-adarsh2024automating}{Adarsh et al. 2024}). Our work
extends these techniques to the novel context of evaluating the
distinctiveness of creative outputs across intellectual property
domains. These embeddings preserve semantic relationships by positioning
semantically similar items closer together and dissimilar items farther
apart, thereby capturing underlying meaning and context. By combining
embeddings with MMD, we measure the \emph{semantic} distance between two
creative processes, providing a robust and meaningful quantitative
assessment of their (dis)similarity.

\subsection{Empirical Validation}\label{empirical-validation}

We validate our methodology across two increasingly complex visual
domains. First, we establish the statistical robustness of our method
using the MNIST dataset of handwritten digits, where we have clear
ground truth regarding distributional differences. This controlled
experiment demonstrates our method's ability to distinguish between
distributions even with limited sample sizes, quantify degrees of
difference between similar and dissimilar distributions, and establish
appropriate statistical confidence thresholds. By embedding digit images
using a convolutional neural network (LeNet-5) and applying our MMD
framework, we systematically evaluate both the sensitivity and
specificity of our approach.

Second, we extend our validation to a more challenging real-world domain
by analyzing the AI-ArtBench dataset
(\citeproc{ref-silva2024artbrain}{Silva et al. 2024}), which contains
185,015 artistic images across ten art styles---including 60,000
human-created artworks and 125,015 AI-generated images produced by two
different generative models (Latent Diffusion and Standard Diffusion).
This dataset is particularly valuable for our purposes, as recent
research demonstrates that humans can identify AI-generated images with
only approximately 58\% accuracy, highlighting the increasingly blurred
line between human and AI creativity in the visual arts. By leveraging
CLIP embeddings to capture semantic and stylistic elements of the
artwork, we test whether our MMD-based approach can detect statistically
significant differences between human-created and AI-generated
distributions---and between different AI generation techniques---that
might elude human perception. This application directly addresses
whether AI-generated art remains statistically distinguishable from
human-created art, even as visual differences become increasingly
subtle.

\subsection{Contributions and
Organization}\label{contributions-and-organization}

First and foremost, we contribute a novel methodological framework with
significant implications for IP law. Our methodology shifts the focus
from comparing individual outputs to assessing the distinctiveness of
the underlying generative processes. By combining kernel mean embeddings
(KME), MMD, and domain-specific machine learning embeddings, we directly
address fundamental limitations of traditional legal assessments: the
impossibility of exhaustive pairwise comparisons between effectively
infinite sets of AI outputs and prior art, the complexities arising from
combining pairwise similarity metrics, and the inherent subjectivity of
qualitative comparisons. In contrast, we offer a statistically robust
metric to determine whether an AI's creative process is meaningfully
different from the processes that generated existing works.

Our approach is designed to be practicable. Unlike AI detection systems
that require extensive training data and model-specific tuning (e.g.,
\citeproc{ref-mukherjee2024safeguarding}{Mukherjee 2024}), our method
requires no training data and operates effectively with limited samples
(as few as five samples from each distribution). This data efficiency is
crucial for contexts where comprehensive datasets are often unavailable
or short, such as evaluating the novelty of AI-generated works against a
single artist's portfolio or assessing trademark distinctiveness in
specialized markets. This practicality makes our approach immediately
applicable in real-world legal settings, providing courts and
policymakers with a principled, quantitative tool for assessing AI
novelty that aligns with established statistical methods.

Moreover, we provide statistically significant evidence that
AI-generated outputs can be distinct from prior art. By demonstrating
that AI systems can exhibit a measurable degree of novelty at the
process level, we inform ongoing legal debates (e.g., \emph{Kadrey v.
Meta}, 2023) that center on whether AI-generated content represents
meaningful creative contributions or mere recombinations of existing
works.

Central to these debates is the argument colloquially known as the
``stochastic parrot'' critique. This view holds that AI systems merely
replicate learned patterns with superficial variations, lacking genuine
understanding or creative intent
(\citeproc{ref-bender2021dangers}{Bender et al. 2021}). Consequently, AI
outputs are seen as inherently \emph{functionally}
derivative,\footnote{A ``derivative work'' under 17 U.S.C. § 101
  involves recasting or adapting preexisting works. While AI outputs may
  adapt source material in ways that resemble derivative works, AI
  systems legally cannot be authors (17 U.S.C. § 106(2)) nor possess the
  requisite intent (\emph{mens rea}). The term ``functionally
  derivative'' denotes this operational resemblance without implying a
  legal status.} substantially based on or adapted from prior works,
reflecting statistical correlations in their training data rather than
original thought.\footnote{Much of the current legal debate, including
  lawsuits against AI developers, centers on whether AI outputs are
  substantially similar to, and therefore infringing derivatives of, the
  copyrighted works within their vast training datasets. \emph{See},
  e.g., \emph{Authors Guild et al.~v. OpenAI Inc.}, No.~1:23-cv-08292
  (S.D.N.Y. 2023); \emph{Andersen et al.~v. Stability AI Ltd.},
  No.~3:23-cv-00201 (N.D. Cal. 2023). Although other factors like the
  idea/expression dichotomy and normative questions are relevant
  (\citeproc{ref-grimmelmann2015there}{Grimmelmann 2015};
  \citeproc{ref-lemley2023generative}{Lemley 2023}), the stochastic
  parrot critique underpins arguments against AI originality
  (\citeproc{ref-marcus2019rebooting}{Marcus and Davis 2019}).}

Prior empirical findings on the novelty of AI-generated content are
profoundly split. On the one hand, research documents AI systems
memorizing and reproducing their training data
(\citeproc{ref-Copyleaks_2024}{Copyleaks 2024}). Studies employing
methods such as verbatim text matching have revealed substantial copying
(\citeproc{ref-lee2022deduplicating}{Lee et al. 2022};
\citeproc{ref-chang2023speak}{Chang et al. 2023};
\citeproc{ref-carlini2023extracting}{Nasr et al. 2023}), with larger
models showing a greater propensity for memorization
(\citeproc{ref-diakopoulos2023memorized}{Diakopoulos 2023}). These
findings lend support to the stochastic parrot hypothesis and feature
prominently in legal arguments concerning substantial similarity and
infringement.\footnote{E.g., \emph{Sarah Andersen et al.~v. Stability AI
  Ltd., Midjourney Inc., and DeviantArt Inc.}, No.~3:23-cv-00201-WHO
  (N.D. Cal. 2023); \emph{Authors Guild et al.~v. OpenAI Inc.},
  No.~1:23-cv-08292 (SHS) (S.D.N.Y. 2023); and \emph{Getty Images (US),
  Inc.~v. Stability AI, Inc.}, No.~1:23-cv-00135 (D. Del. 2023).} On the
other hand, a growing body of evidence, often relying on semantic
analysis and human evaluations, challenges the portrayal of AI systems
as mere mimics. For instance, analyses such as RAVEN suggest
AI-generated text can achieve high structural or thematic novelty
despite lower local novelty (\citeproc{ref-mccoy2023much}{McCoy et al.
2023}). Other work shows AI achieving human-like systematic
generalization (\citeproc{ref-lake2023human}{Lake and Baroni 2023}) or
performing well on scholarly novelty benchmarks
(\citeproc{ref-lin2024evaluating}{Lin et al. 2024}), suggesting AI can
generate outputs that meaningfully diverge from training data.

While much of this empirical debate has centered on text, our research
addresses the stochastic parrot narrative within the visual domain using
a distributional perspective. We demonstrate that AI-generated artworks
are consistently distinguishable from human-created works at the
distributional level, even when human evaluators struggle to visually
discriminate between them. Notably, these differences emerge even at
small sample sizes (as few as 7), suggesting the divergence is
fundamental. Our methodology detects these systematic differences while
addressing limitations in previous research: unlike memorization studies
focusing on exact matches, our approach captures distributional novelty;
unlike semantic evaluations potentially relying on subjective judgments,
our framework provides an objective, quantifiable metric. By measuring
novelty at the process level, we offer empirical evidence that, at least
in the visual domains studied, AI systems do more than merely recombine
elements---they generate outputs from a statistically distinct
distribution.

The remainder of our paper is organized as follows. Section 2 provides a
detailed explanation of our MMD-based methodology. Section 3 presents
validation results from the MNIST and AI-ArtBench applications,
demonstrating the performance of the methodology under controlled
conditions and with real-world visual data, where human perception
struggles to distinguish between AI and human origins. The final section
discusses the implications of our findings, addresses limitations,
outlines directions for future research, and concludes.

\section{Method Development}\label{method-development}

The core question we address is whether one body of content (e.g.,
AI-generated outputs) is statistically distinguishable from another
(e.g., prior art)---that is, whether the two bodies of content are
distinct with very high probability. Our approach is based on a
straightforward intuition: consider the probability of a particular
document (e.g., an image) arising from two distinct generative
processes. If an AI system merely reproduces what it has previously
encountered, its generative process will mirror that of prior art; the
output would be equally likely to emerge from the AI as from the human
creative processes underlying prior art. Conversely, if the AI is
genuinely innovative, its outputs will differ \emph{systematically} from
prior art. Certain documents will have different probabilities of
arising from the AI than from prior art, indicating that the AI is not
simply replicating existing content. In other words, true novelty
manifests at the \emph{distributional} level.

To this end, we propose a statistical framework based on KMEs (for
detailed technical derivations and properties, see
\citeproc{ref-gretton2012kernel}{Gretton et al. 2012};
\citeproc{ref-muandet2017kernel}{Muandet et al. 2017})\footnote{The
  mathematics underlying KMEs is complex. Here, we provide a discussion
  tailored to our specific use; additional details can be found in the
  referenced works, with an exhaustive presentation in Berlinet and
  Thomas-Agnan (\citeproc{ref-berlinet2011reproducing}{2011}).}, MMD,
and machine learning embeddings. Our methodology integrates two
complementary strands of research on embeddings---mappings that
transform mathematical objects (e.g., text or images) into a new space
while preserving key relationships. One strand defines abstract notions
of embeddings and establishes formal properties useful for theoretical
analysis (\citeproc{ref-sriperumbudur2010hilbert}{Sriperumbudur et al.
2010}). The other develops practical algorithms, which we term
\emph{machine learning embeddings}, for discovering effective embeddings
in various domains, such as text and images
(\citeproc{ref-mikolov2013efficient}{Mikolov et al. 2013}). We combine
these approaches to create a unified framework for distinctiveness and
novelty analysis.

Specifically, we first employ a machine learning embedding algorithm to
represent both prior art and AI-generated outputs (which may be
non-numerical, such as images) in a vector space. These machine learning
embeddings represent complex data as points, where distances between
points reflect semantic relationships among the original data items.
They enable numerical analysis of the non-numerical data, providing a
natural measure of distance derived from the semantic content of the
embedded objects (\citeproc{ref-stammbach2021docscan}{Stammbach and Ash
2021}). We then use these representations to construct KMEs of the
distributions of \emph{both} prior art and AI-generated outputs. This
compositional mapping (from the non-numerical data to the numerical
vector space, and then via the kernel's feature map into a reproducing
kernel Hilbert space (RKHS) of functions) allows us to define an MMD---a
type of integral probability metric (IPM)---within the RKHS. This
approach yields a metric for hypothesis testing to determine whether two
creative processes are statistically distinguishable.

The IPM provides a principled way to measure the distance between two
probability distributions. When applied to AI outputs and prior art
using the compositional feature map described above, the IPM in the
resulting RKHS corresponds to the distance between the underlying
generative processes, directly quantifying systemic novelty.

\subsection{Definitions and
Background}\label{definitions-and-background}

Let \(X = \{x_1, x_2, \dots, x_m\}\) be a sample of embedded
AI-generated outputs drawn from an unknown probability distribution
\(P\), and let \(Y = \{y_1, y_2, \dots, y_n\}\) be a sample of embedded
prior art outputs drawn from an unknown probability distribution \(Q\).
Our goal is to test the null hypothesis \(H_0: P = Q\) (the
distributions are identical) against the alternative hypothesis
\(H_1: P \neq Q\) (the distributions differ).

An RKHS \(\mathcal{H}\) is a Hilbert space of functions defined by a
positive definite kernel function
\(k: \mathcal{X} \times \mathcal{X} \rightarrow \mathbb{R}\), where
\(\mathcal{X}\) is the input space (e.g., the space of possible AI
outputs \(X\), or the space of prior art \(Y\)). As an RKHS is a type of
Hilbert space (i.e., a complete inner product space), it inherits all of
its properties.

What distinguishes an RKHS from other function spaces is its reproducing
property. For every function \(f\) in the RKHS and every point
\(x \in \mathcal{X}\), the value of the function at that point,
\(f(x)\), can be reproduced by the inner product of \(f\) with the
kernel evaluation function, which is the kernel function centered at
\(x\), \(k(\cdot, x)\): \[
f(x) = \langle f, k(\cdot, x) \rangle.
\]\\
The kernel function provides a way to ``probe'' the function \(f\) at
any point \(x\) through the inner product. It allows us to represent
high-dimensional or even infinite-dimensional feature spaces implicitly,
which is a cornerstone of kernel methods in machine learning
(\citeproc{ref-shawe2004kernel}{Shawe-Taylor and Cristianini 2004};
\citeproc{ref-steinwart2008support}{Steinwart and Christmann 2008}).

A KME leverages the machinery of RKHS to embed probability distributions
into a Hilbert space. Specifically, given a probability distribution
\(P\) over a domain \(\mathcal{X}\), and a reproducing kernel \(k\) that
induces the RKHS \(\mathcal{H}\), the KME of \(P\) into \(\mathcal{H}\)
is the expected value of the kernel evaluation function (associated with
\(k\)) over \(P\). Mathematically, the embedding \(\mu_P\) of \(P\) is
given by: \[
\mu_P = \mathbb{E}_{X \sim P}[k(X, \cdot)] = \int_{\mathcal{X}} k(x, \cdot) \, dP(x),
\] where \(k(X, \cdot)\) represents the kernel evaluation function, a
function in the RKHS defined by fixing one argument of the kernel:
\(k(x, \cdot): y \mapsto k(x, y)\). The integral
\(\int_{\mathcal{X}} k(x, \cdot) \, dP(x)\) is a Bochner integral.

A KME maps an entire probability distribution \(P\) to a single,
corresponding function in the RKHS \(\mathcal{H}\). If the kernel \(k\)
is \emph{characteristic}, then the mapping from distributions to their
KMEs is \emph{injective} (one-to-one). This means that, given a
characteristic kernel, for any two probability distributions \(P\) and
\(Q\) on \(\mathcal{X}\), if their KMEs are equal (\(\mu_P = \mu_Q\)),
then the distributions themselves must be equal (\(P = Q\)). Conversely,
if \(P \neq Q\), then \(\mu_P \neq \mu_Q\).

Intuitively, a characteristic kernel ensures that if two probability
distributions differ, their kernel mean embeddings will also differ.
This builds on the notion that a kernel function measures the similarity
between two points in the input space (\(\mathcal{X}\)); it follows that
the distance between the embeddings of two distributions in the RKHS
corresponds to the similarity of samples (in the input space) drawn from
these distributions.

MMD is a statistic that quantifies the distance between two probability
distributions, \(P\) and \(Q\), as the distance between their KMEs in
the RKHS (\citeproc{ref-gretton2012kernel}{Gretton et al. 2012}): \[
\text{MMD}^2(P, Q) = \| \mu_P - \mu_Q \|_{\mathcal{H}}^2,
\] where \(\|\cdot\|_{\mathcal{H}}\) denotes the norm in the RKHS.

More generally, an IPM between distributions \(P\) and \(Q\) is defined
as: \[
\text{IPM}(P, Q) = \sup_{f \in \mathcal{F}} \left| \int_{\mathcal{X}} f(x) \, dP(x) - \int_{\mathcal{X}} f(x) \, dQ(x) \right|,
\] where \(\mathcal{F}\) is a class of functions. Thus, MMD is a type of
IPM, where the class of functions \(\mathcal{F}\) is the unit ball in
the RKHS.

\subsection{Employing MMD to Measure
Novelty}\label{employing-mmd-to-measure-novelty}

Suppose both the AI's outputs and prior art are numerical data. Given
samples \(X\) and \(Y\) from distributions \(P\) and \(Q\),
respectively, we can use an \emph{unbiased} empirical estimator of
\(\text{MMD}^2\): \[
\widehat{\text{MMD}}^2_u(X, Y) = \frac{1}{m(m-1)} \sum_{i=1}^{m} \sum_{\substack{j=1 \\ j \neq i}}^{m} k(x_i, x_j) + \frac{1}{n(n-1)} \sum_{i=1}^{n} \sum_{\substack{j=1 \\ j \neq i}}^{n} k(y_i, y_j) - \frac{2}{mn} \sum_{i=1}^{m} \sum_{j=1}^{n} k(x_i, y_j).
\]

This estimator can be computed efficiently using the \emph{kernel
trick}, which avoids explicit computation of the feature maps
\(k(\cdot, x)\). Specifically, the components of this equation are
interpreted as follows:

\begin{itemize}
\tightlist
\item
  \(k(\cdot, \cdot)\) is the kernel function used within the RKHS.
\item
  \(x_i\) and \(x_j\) are samples drawn from distribution \(P\).
\item
  \(y_i\) and \(y_j\) are samples drawn from distribution \(Q\).
\item
  \(m\) and \(n\) are the sample sizes drawn from distributions \(P\)
  and \(Q\), respectively.
\item
  The first term,
  \(\frac{1}{m(m-1)} \sum_{i=1}^{m} \sum_{\substack{j=1 \\ j \neq i}}^{m} k(x_i, x_j)\),
  calculates the average of the kernel evaluations over all unique pairs
  of samples from \(P\).
\item
  The second term,
  \(\frac{1}{n(n-1)} \sum_{i=1}^{n} \sum_{\substack{j=1 \\ j \neq i}}^{n} k(y_i, y_j)\),
  calculates the average of the kernel evaluations over all unique pairs
  of samples from \(Q\).
\item
  The third term,
  \(-\frac{2}{mn} \sum_{i=1}^{m} \sum_{j=1}^{n} k(x_i, y_j)\), subtracts
  twice the average of the kernel evaluations between samples from \(P\)
  and samples from \(Q\).
\end{itemize}

In general, we would like to apply this framework to various data types
(e.g., text, images) and not just numerical data. Therefore, we propose
first mapping the raw data into a numerical vector space using a machine
learning embedding.

Let \(\phi_x: \mathcal{X} \rightarrow \mathcal{Z}\) represent this
embedding, where \(\mathcal{Z}\) is typically \(\mathbb{R}^d\), with
\(d\) being the dimensionality of the embedding space. The choice of
embedding depends on the specific data type (e.g., a text embedding for
text data, a convolutional neural network (CNN) embedding for images).
This embedding should capture relevant relationships between data points
(e.g., semantic similarity for text, visual similarity for images).

We propose the following compositional structure for the feature map: \[
\phi(x) = \phi_k(\phi_x(x)), \quad x \in \mathcal{X},
\] where:

\begin{itemize}
\tightlist
\item
  \(\phi_x(x)\) is the machine learning embedding of the raw data point
  \(x\).
\item
  \(\phi_k\) is the feature map defined implicitly by the choice of
  kernel \(k\) in the RKHS.
\end{itemize}

This compositional map \(\phi\) has the following interpretation:
\(\phi_x\) maps the non-numerical data into a numerical vector space
with sufficient dimensionality to distinguish between any two distinct
elements, and \(\phi_k\) is the feature map in an RKHS with sufficient
richness to ensure that a KME \(\mu_P\) accurately describes \(P\).

Crucially, the kernel defined by this compositional structure is
characteristic if the machine learning embedding is injective and if the
kernel in the RKHS is characteristic. Formally,
\(k(\phi_x(x_i), \phi_x(x_j))\) is characteristic if \(k\) is
characteristic on \(\mathcal{Z}\) and \(\phi_x\) is injective (see
Proposition \ref{prop:composed_kernel} below). This allows us to apply
our MMD framework to any data type for which a suitable embedding
\(\phi_x\) can be found. The MMD estimator is computed by evaluating the
kernel function on the embedded data, replacing \(k(x_i, x_j)\) with
\(k(\phi_x(x_i), \phi_x(x_j))\) in the formula above.

\begin{proposition} \label{prop:composed_kernel}
If \(\phi_x: \mathcal{X} \to \mathcal{Z}\) is injective and the kernel \(k\) is characteristic on \(\mathcal{Z}\), then the composed kernel \(k_{\phi}(x_i, x_j) = k(\phi_x(x_i), \phi_x(x_j))\) is characteristic on \(\mathcal{X}\).
\end{proposition}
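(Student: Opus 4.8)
The plan is to reduce the characteristic property of the composed kernel $k_\phi$ on $\mathcal{X}$ to the already-assumed characteristic property of $k$ on $\mathcal{Z}$, using the \emph{pushforward} (image) measure as the bridge. Recall that, by definition, $k_\phi$ is characteristic precisely when its kernel mean embedding is injective on probability distributions, equivalently when $\text{MMD}_{k_\phi}(P,Q)=0$ implies $P=Q$. For a distribution $P$ on $\mathcal{X}$, write $\phi_x\#P$ for its pushforward under $\phi_x$, i.e.\ the distribution on $\mathcal{Z}$ defined by $(\phi_x\#P)(B)=P(\phi_x^{-1}(B))$ for measurable $B\subseteq\mathcal{Z}$; intuitively, $\phi_x\#P$ is the law of $\phi_x(X)$ when $X\sim P$.

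The first step is a change-of-variables identity that moves the discrepancy from $\mathcal{X}$ to $\mathcal{Z}$. Expanding the squared MMD of the composed kernel in its double-expectation form and substituting $k_\phi(x,x')=k(\phi_x(x),\phi_x(x'))$, each term becomes an expectation of $k$ evaluated at images $\phi_x(\cdot)$; applying the law of the unconscious statistician to each term yields
\[
\text{MMD}_{k_\phi}^2(P,Q)=\text{MMD}_{k}^2\bigl(\phi_x\#P,\ \phi_x\#Q\bigr).
\]
Thus the composed-kernel discrepancy on $\mathcal{X}$ equals the base-kernel discrepancy between the two image measures on $\mathcal{Z}$.

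The second step chains the two hypotheses. Suppose $\text{MMD}_{k_\phi}(P,Q)=0$. By the identity, $\text{MMD}_{k}(\phi_x\#P,\phi_x\#Q)=0$, and since $k$ is characteristic on $\mathcal{Z}$ this forces $\phi_x\#P=\phi_x\#Q$. It then remains to transfer this equality of image measures back to an equality of the originals, which is exactly where injectivity of $\phi_x$ enters: for injective $\phi_x$ one has $\phi_x^{-1}(\phi_x(A))=A$, so $P(A)=(\phi_x\#P)(\phi_x(A))=(\phi_x\#Q)(\phi_x(A))=Q(A)$ for every measurable $A\subseteq\mathcal{X}$, giving $P=Q$ and hence the characteristic property of $k_\phi$.

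The main obstacle is the measure-theoretic legitimacy of this last step rather than any hard analysis. The manipulation $P(A)=(\phi_x\#P)(\phi_x(A))$ requires $\phi_x(A)$ to be a measurable subset of $\mathcal{Z}$, which is not automatic for an arbitrary injective map. I would discharge this by assuming the standard setting in which the result is clean---$\mathcal{X}$ and $\mathcal{Z}$ standard Borel spaces and $\phi_x$ injective and Borel measurable---so that the Lusin--Souslin theorem guarantees that $\phi_x$ carries Borel sets to Borel sets and that the pushforward map on measures is injective. With that regularity in hand the three steps above close the argument; absent it, one would instead restrict attention to the (measurable) image $\phi_x(\mathcal{X})$ and argue on the trace $\sigma$-algebra there.
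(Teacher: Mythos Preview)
Your proof is correct and shares the paper's core strategy of using pushforward measures to transfer the problem from $\mathcal{X}$ to $\mathcal{Z}$. The execution differs in two useful ways. First, you work directly with the squared MMD and establish the clean change-of-variables identity $\text{MMD}_{k_\phi}^2(P,Q)=\text{MMD}_k^2(\phi_x\#P,\phi_x\#Q)$, whereas the paper passes through the kernel mean embeddings themselves, showing $\mu_{P_x}(\cdot)=\mu_{P_z}(\phi_x(\cdot))$ and then invoking density of $\{k(\cdot,z):z\in\mathcal{Z}\}$ in the RKHS to conclude $\mu_{P_z}=\mu_{Q_z}$; your route is shorter and sidesteps a subtlety about whether agreement of the embeddings on the range $\phi_x(\mathcal{X})$ alone suffices. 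Second, you explicitly flag the measure-theoretic requirement that $\phi_x(A)$ be measurable for the pullback step $P(A)=(\phi_x\#P)(\phi_x(A))$ to make sense, and you discharge it via Lusin--Souslin under a standard-Borel hypothesis; the paper performs the same manipulation (setting $B=\phi_x(A)$ and writing $P_z(B)=P_x(\phi_x^{-1}(B))$) without comment, so your treatment is more careful on this point.
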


\begin{proof}
Let \(P_x\) and \(Q_x\) be two probability measures on \(\mathcal{X}\). Define their pushforward measures \(P_z = \phi_{x\#}P_x\) and \(Q_z = \phi_{x\#}Q_x\) on \(\mathcal{Z}\), respectively. By definition, for any measurable set \(B \subseteq \mathcal{Z}\):
\[
P_z(B) = P_x(\phi_x^{-1}(B)), \quad Q_z(B) = Q_x(\phi_x^{-1}(B)).
\]

Since \(\phi_x\) is injective, it follows immediately that if \(P_x \neq Q_x\), then there exists a measurable set \(A \subseteq \mathcal{X}\) such that \(P_x(A) \neq Q_x(A)\). Letting \(B = \phi_x(A)\), we have:
\[
P_z(B) = P_x(\phi_x^{-1}(B)) \neq Q_x(\phi_x^{-1}(B)) = Q_z(B).
\]
Thus, if \(P_x \neq Q_x\), then \(P_z \neq Q_z\).

Now, consider the kernel mean embeddings under the composed kernel \(k_{\phi}\):
\[
\mu_{P_x}(\cdot) = \mathbb{E}_{x \sim P_x}[k_{\phi}(\cdot, x)] = \mathbb{E}_{x \sim P_x}[k(\phi_x(\cdot), \phi_x(x))].
\]

Since \(x \sim P_x\) implies \(\phi_x(x) \sim P_z\), we rewrite this embedding as:
\[
\mu_{P_x}(\cdot) = \mathbb{E}_{z \sim P_z}[k(\phi_x(\cdot), z)].
\]

This is precisely the kernel mean embedding of \(P_z\) in the RKHS associated with \(k\), evaluated at \(\phi_x(\cdot)\). Thus, we have:
\[
\mu_{P_x}(\cdot) = \mu_{P_z}(\phi_x(\cdot)),
\]
where \(\mu_{P_z}\) is the kernel mean embedding of \(P_z\) in the RKHS \(\mathcal{H}_z\) on \(\mathcal{Z}\).

As \(k\) is characteristic on \(\mathcal{Z}\), it follows that if \(P_x \neq Q_x\):
\[
\mu_{P_x}(\cdot) = \mu_{P_z}(\phi_x(\cdot)) \neq \mu_{Q_z}(\phi_x(\cdot)) = \mu_{Q_x}(\cdot).
\]

Suppose instead that \(\mu_{P_x} = \mu_{Q_x}\). Then, we have:
\[
\mu_{P_z}(\phi_x(\cdot)) = \mu_{Q_z}(\phi_x(\cdot)).
\]

This means that for all \(x \in \mathcal{X}\), \(\langle \mu_{P_z}, k(\cdot, \phi_x(x)) \rangle_{\mathcal{H}_z} = \langle \mu_{Q_z}, k(\cdot, \phi_x(x)) \rangle_{\mathcal{H}_z}\). Because \(k\) is characteristic, the span of the kernel functions \(\{k(\cdot, z) : z \in \mathcal{Z}\}\) is dense in \(\mathcal{H}_z\). Therefore, since the inner products of \(\mu_{P_z}\) and \(\mu_{Q_z}\) agree on a dense subset of \(\mathcal{H}_z\), they must be equal as functions: \(\mu_{P_z} = \mu_{Q_z}\). Because \(k\) is characteristic, this implies \(P_z = Q_z\). Finally, since \(\phi_x\) is injective, equality of pushforward measures \(P_z = Q_z\) implies equality of the original measures \(P_x = Q_x\).

Therefore, the composed kernel \(k_{\phi}\) is characteristic on \(\mathcal{X}\).
\end{proof}

\subsection{Hypothesis Testing}\label{hypothesis-testing}

To test the null hypothesis \(H_0: P = Q\), we use the empirical
\(\widehat{\text{MMD}}^2_u\) as our test statistic. To determine
statistical significance, we employ the permutation-based procedure
detailed in Algorithm \ref{alg:mmd_permutation_based}.

\begin{algorithm}[htbp]
\caption{Permutation-Based Hypothesis Test for MMD}
\label{alg:mmd_permutation_based}
\begin{algorithmic}[1]
\REQUIRE Samples \(X = \{x_1, \dots, x_m\}\) from distribution \(P\), samples \(Y = \{y_1, \dots, y_n\}\) from distribution \(Q\), kernel function \(k(\cdot,\cdot)\), number of permutation iterations \(P\) (e.g., \(P=1000\)), significance level \(\alpha\) (e.g., \(\alpha=0.01\)).
\STATE Compute the observed statistic:
    \[
    \Delta_{\text{obs}} \gets \widehat{\text{MMD}}^2_u(X, Y),
    \]
    where:
    \[
    \widehat{\text{MMD}}^2_u(X, Y) = \frac{1}{m(m-1)} \sum_{i=1}^{m}\sum_{\substack{j=1 \\ j\neq i}}^{m} k(x_i,x_j) + \frac{1}{n(n-1)} \sum_{i=1}^{n}\sum_{\substack{j=1 \\ j\neq i}}^{n} k(y_i,y_j) - \frac{2}{mn}\sum_{i=1}^{m}\sum_{j=1}^{n} k(x_i,y_j).
    \]
\STATE Pool samples into a single dataset of size \(m+n\):
    \[
    Z \gets X \cup Y.
    \]
\FOR{\(p = 1\) \TO \(P\)}
    \STATE Randomly permute the pooled sample \(Z\). Let the permuted sample be \(Z^*\).
    \STATE Partition \(Z^*\) into two sets: \(X^*_p\) containing the first \(m\) elements, and \(Y^*_p\) containing the remaining \(n\) elements.
    \STATE Compute the statistic on the permuted partition:
        \[
        \Delta^*_p \gets \widehat{\text{MMD}}^2_u(X^*_p, Y^*_p).
        \]
\ENDFOR
\STATE Calculate the p-value:
    \[
    p \gets \frac{1}{P}\sum_{p=1}^{P} \mathbf{1}\{\Delta^*_p \ge \Delta_{\text{obs}}\},
    \]
    where the indicator function is defined as:
    \[
    \mathbf{1}\{A\} = \begin{cases}
    1 & \text{if } A \text{ is true}\\[6pt]
    0 & \text{otherwise}
    \end{cases}.
    \]
\STATE Construct the \((100\times(1-\alpha))\%\) confidence interval from the permutation-based distribution:
    \[
    \left[Q_{\alpha/2}\left(\{\Delta^*_p\}_{p=1}^P\right), Q_{1-\alpha/2}\left(\{\Delta^*_p\}_{p=1}^P\right)\right],
    \]
    where \(Q_{\gamma}(\cdot)\) denotes the \(\gamma\)-quantile of the permutation-based statistics.
\IF{\(\Delta_{\text{obs}}\) falls outside the computed confidence interval (or equivalently, if \(p < \alpha\))}
    \STATE Reject \(H_0: P=Q\).
\ELSE
    \STATE Do not reject \(H_0\).
\ENDIF
\end{algorithmic}
\end{algorithm}

The algorithm provides a step-by-step procedure for resampling,
computing the MMD statistic under the null hypothesis, and calculating
the p-value and confidence interval. The confidence interval represents
the range of plausible values for the MMD statistic under the null
hypothesis. As described in the algorithm, if the observed statistic
\(\widehat{\text{MMD}}^2_u(X, Y)\) falls outside the confidence
interval---or equivalently, if the p-value is less than the significance
level \(\alpha\)---we reject the null hypothesis and conclude that the
distributions \(P\) and \(Q\) are statistically significantly different.

\section{Validation: MNIST Handwritten
Digits}\label{validation-mnist-handwritten-digits}

Before applying our MMD-based methodology to the legally salient domain
of AI-generated art, we first validate its statistical properties and
practical utility in a controlled setting with known ground truth. For
this purpose, we use the MNIST dataset of handwritten digits
(\citeproc{ref-lecun1998gradient}{LeCun et al. 1998}), a widely
recognized benchmark in machine learning. MNIST comprises 70,000
grayscale images (28×28 pixels) of handwritten digits from 0 to 9, split
into a training set of 60,000 images and a test set of 10,000 images
(containing approximately 1000 examples per digit class). Each image
represents a single digit, providing clear ground truth for our
validation: we know \emph{a priori} that the distributions of different
digits should be distinct.

To represent the images in a vector space suitable for MMD calculation,
we employ a convolutional neural network (CNN) embedding. Specifically,
we use the classic LeNet-5 architecture
(\citeproc{ref-lecun1998gradient}{LeCun et al. 1998}), a CNN designed
explicitly for handwritten digit recognition. LeNet-5 consists of two
convolutional layers with average pooling, followed by three fully
connected (dense) layers. The architecture details are summarized in
Table \ref{tab:lenet5}.

\begin{table}[htbp]
\centering
\begin{tabular}{lll}
\toprule
Layer Type                          & Output Shape & Parameters \\
\midrule
Input                               & 28×28×1      & 0          \\
Conv2D (6 filters, 5×5 kernel, ReLU)& 24×24×6      & 156        \\
AvgPool2D (2×2)                     & 12×12×6      & 0          \\
Conv2D (16 filters, 5×5 kernel, ReLU)& 8×8×16      & 2,416      \\
AvgPool2D (2×2)                     & 4×4×16       & 0          \\
Flatten                             & 256          & 0          \\
Dense (120 units, ReLU)             & 120          & 30,840     \\
Dense (84 units, ReLU)              & 84           & 10,164     \\
Dense (10 units, Softmax)           & 10           & 850        \\
\bottomrule
\end{tabular}
\caption{LeNet-5 Architecture Details}
\label{tab:lenet5}
\end{table}

We trained LeNet-5 on the MNIST training set using the Adam optimizer
(learning rate = 0.001), categorical cross-entropy loss, and a batch
size of 64. Training employed early stopping (patience = 10 epochs) and
model checkpointing (saving the best model based on validation loss).
The final trained model achieved excellent performance, with a test loss
of 0.0194 and test accuracy of 99.35\%, confirming its ability to
capture distinguishing visual features of each digit.

For our embedding, we extract the output of the second-to-last dense
layer (84 units) after passing each image through the trained LeNet-5
model. This provides an 84-dimensional vector representation for each
image, effectively mapping the high-dimensional image data into a
lower-dimensional space suitable for MMD analysis.

\subsection{MMD Analysis Procedure and
Setup}\label{mmd-analysis-procedure-and-setup}

Our validation procedure comprises the following steps:

\begin{enumerate}
\def\labelenumi{\arabic{enumi}.}
\item
  \textbf{Embedding Extraction:} We process all images from the MNIST
  \emph{test} set through our trained LeNet-5 model and extract the
  resulting 84-dimensional embeddings from the second-to-last dense
  layer. These embeddings represent each digit image as a numerical
  vector that captures the salient visual features identified by the
  neural network during training.
\item
  \textbf{Sample Generation:} For each digit pair (e.g., digit 0
  vs.~digit 1, digit 0 vs.~digit 2, etc.), we compile two separate sets
  of embeddings---one for each digit class. We then randomly sample
  (without replacement) specific quantities from these embedding sets
  for our analysis. To ensure balanced comparisons and maintain
  computational efficiency in the heatmap analysis (which involves all
  100 pairwise comparisons), we cap the sample size for each
  distribution at 400 embeddings, a substantial subset given the
  approximately 1000 available test samples per digit. As a negative
  control, we also compare samples drawn from the same digit class
  (e.g., digit 3 vs.~digit 3), where we expect the MMD statistic to be
  near zero and the null hypothesis not to be rejected. This provides a
  baseline for evaluating the method's false-positive rate.
\item
  \textbf{MMD Calculation and Hypothesis Testing:} For each digit pair
  and sample size, we compute the unbiased MMD statistic using a
  Gaussian radial basis function (RBF) kernel. We select the Gaussian
  RBF kernel for its characteristic property and ability to capture
  complex, nonlinear relationships between data points
  (\citeproc{ref-gretton2012kernel}{Gretton et al. 2012}). For the
  bandwidth parameter (\(\sigma\)) of the Gaussian kernel, we implement
  the median heuristic---setting \(\sigma\) to the median of all
  pairwise Euclidean distances in the combined sample. This
  data-adaptive approach provides a bandwidth that is both robust to
  outliers and appropriately scaled to the data's dimensionality. After
  calculating the MMD statistic, we perform the permutation-based
  hypothesis test described in Algorithm 1, using \(P=1000\) permutation
  iterations and a significance level of \(\alpha=0.01\). This test
  evaluates the null hypothesis that the two digit distributions are
  identical.
\item
  \textbf{Sample Size Variation and Rejection Rate Estimation:} To
  specifically stress-test the method's sensitivity and data efficiency,
  we repeat steps 2 and 3 across multiple \emph{very small} sample
  sizes---specifically 4, 5, 6, 7, 8, 9, 10, 12, 16, and 24. For each
  digit pair and each sample size in this range, we perform 100
  independent trials, each involving fresh random sampling and a full
  permutation test. Averaging the outcomes (reject/fail-to-reject
  \(H_0\)) across these 100 trials provides a robust estimate of the
  rejection rate (statistical power) for that specific scenario. We
  focus on a representative set of digit pairs that vary in visual
  similarity, including (0 vs.~1), (1 vs.~7), (2 vs.~8), (3 vs.~5), and
  (4 vs.~9), to evaluate the method's performance across both easy and
  challenging comparisons. This systematic exploration is essential for
  understanding the minimum data requirements for reliable distribution
  discrimination, particularly important for real-world applications
  where large datasets may be unavailable.
\end{enumerate}

By methodically varying both sample sizes and digit pairs, and employing
repeated trials for rejection rate estimation, we comprehensively
evaluate the sensitivity and reliability of our MMD-based approach under
different conditions. This thorough validation protocol ensures that our
method can effectively detect meaningful distributional differences,
even with limited available data---a critical consideration for
practical applications in novelty and distinctiveness assessment.
Anticipating the results, this protocol allows us to rigorously evaluate
the method's performance, expecting it to demonstrate high sensitivity
even with minimal data.

\subsection{Results: MNIST Validation
Study}\label{results-mnist-validation-study}

Figure \ref{fig:mnist_rejection_rate} illustrates the sensitivity of our
approach, displaying the estimated rejection rate of the null hypothesis
(\(H_0: P=Q\)) at a significance level of \(\alpha=0.01\) as the sample
size per distribution increases. The results demonstrate exceptional
data efficiency. For all digit pairs tested---including visually
distinct examples (e.g., 0 vs.~1, 1 vs.~7) and those exhibiting greater
visual similarity (e.g., 3 vs.~5, 4 vs.~9)---the rejection rate rapidly
surpasses the 95\% threshold at just n=6 samples per distribution, and
notably achieves this threshold with as few as n=5 samples for the digit
pair (2 vs.~8). This underscores the method's capability to capture
subtle yet statistically significant distributional differences, a
critically valuable feature in demanding contexts such as IP analyses,
where comprehensive data resources are frequently unavailable. As sample
size increases from \(n=4\) to \(n=24\), rejection rates uniformly
approach 100\% for all distinct digit pairs tested, confirming the
method's robust statistical convergence and reliability.

\begin{figure}[htbp]
\centering
\includegraphics[width=\linewidth]{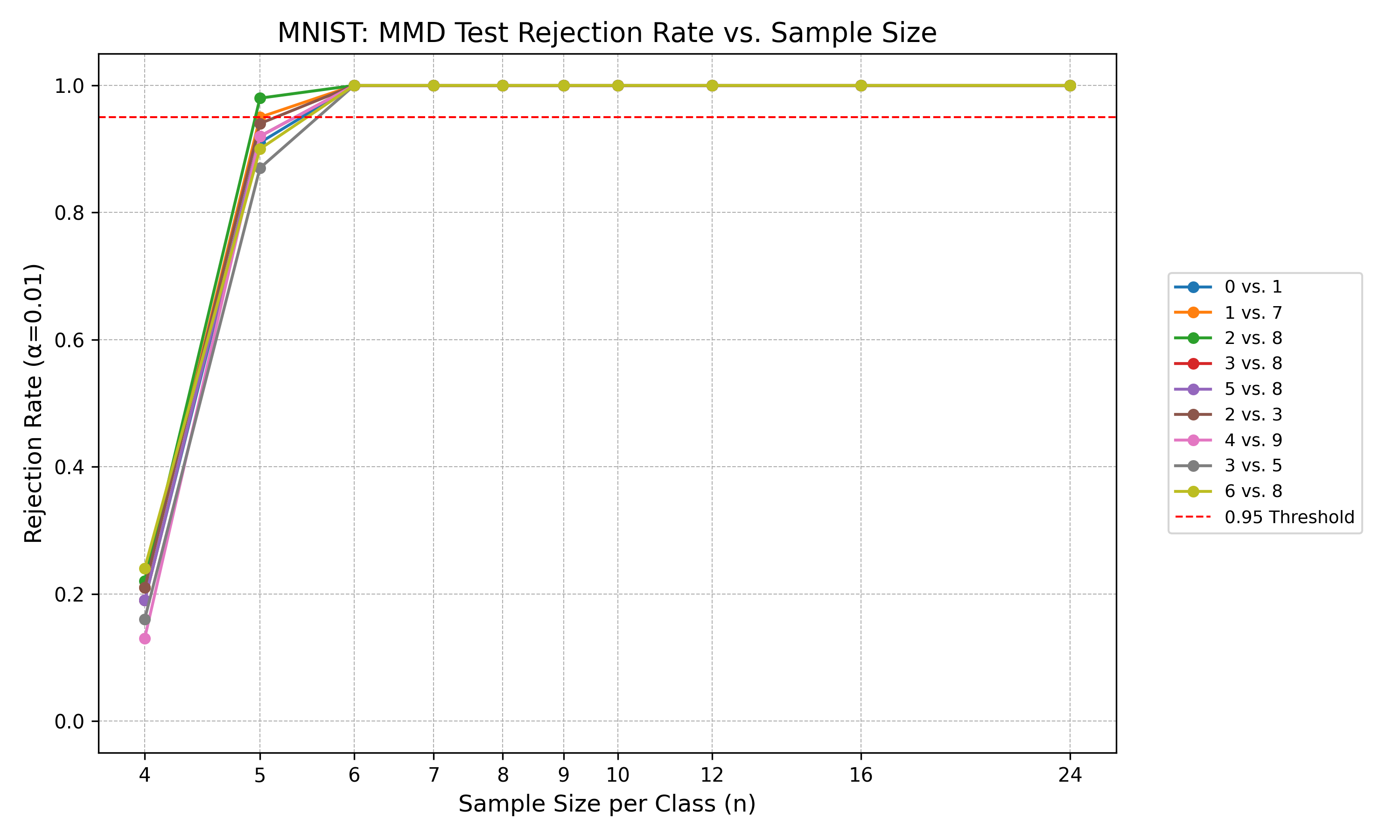}
\caption{Rejection Rate vs. Sample Size for Selected MNIST Digit Pairs}
\label{fig:mnist_rejection_rate}
\begin{minipage}{0.95\linewidth}
\footnotesize
Note: Each line represents the proportion of null hypothesis rejections (\(H_0: P=Q\)) at \(\alpha=0.01\), estimated by averaging results over 100 independent random sampling trials for each sample size and digit pair. The dashed line at 0.95 highlights rapid achievement of high statistical power with very small sample sizes (n=6 for all pairs shown).
\end{minipage}
\end{figure}

Figure \ref{fig:mnist_mmd_heatmap} complements this by depicting MMD
statistics across all digit comparisons at a sample size of \(n=400\).
Diagonal comparisons (negative controls, comparing samples of the same
digit) yield MMD statistics reliably close to zero (-0.0005 to 0.0025)
with uniformly non-significant results (p-values range from 0.0340 to
0.7010 at \(\alpha=0.01\)), confirming excellent control of
false-positive rates. In contrast, all off-diagonal comparisons (90 out
of 90 distinct digit pairs) differ statistically significantly
(\(p < 0.0001\)). Quantitatively, these significant differences range
from an MMD of 0.6558, observed between the visually similar digits 3
and 5, to a maximum MMD of 0.9703 between the highly distinct digits 0
and 3. This alignment between the quantitative MMD measure and intuitive
visual dissimilarity further validates the method's comprehensive
capability to numerically capture distributional differences, suggesting
clear applicability for legal and policy-relevant evaluations of
originality, distinctiveness, and novelty.

\begin{figure}[htbp]
\centering
\includegraphics[width=\linewidth]{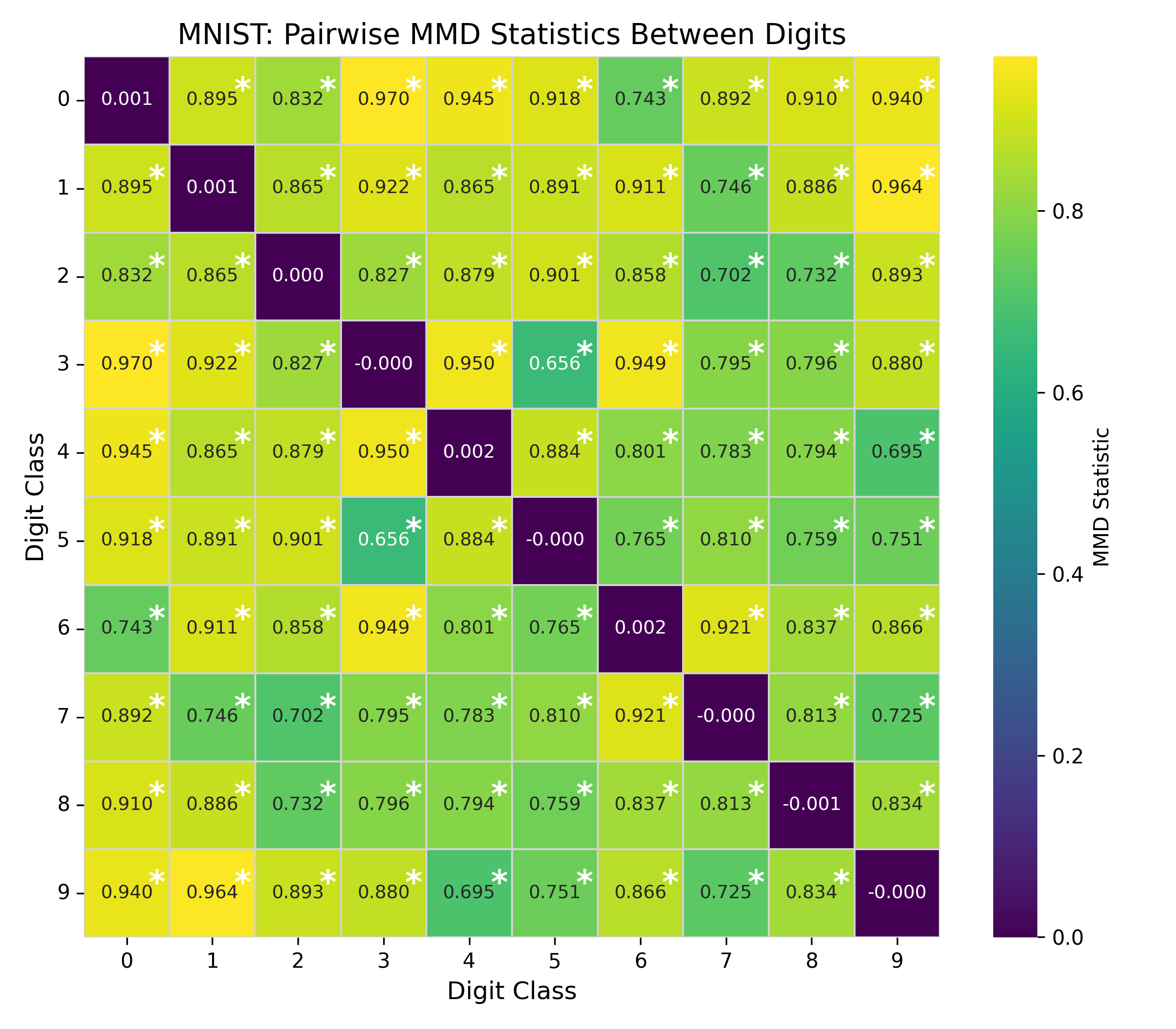}
\caption{Heatmap of MMD Statistics for All MNIST Digit Pairs (Sample Size \(n=400\)).}
\label{fig:mnist_mmd_heatmap}
\begin{minipage}{0.95\linewidth}
\footnotesize
Note: Diagonal cells (negative controls) show near-zero, non-significant MMD values. All off-diagonal cells show statistically significant differences (p < 0.0001, marked with *), with MMD magnitudes reflecting the degree of distributional dissimilarity.
\end{minipage}
\end{figure}

These results clearly demonstrate our methodology's exceptional ability
to reliably and efficiently distinguish between different digit
distributions, achieving statistically significant differentiation
(\(p < 0.01\)) with as few as 5 to 6 samples per digit class. This level
of data efficiency is particularly valuable in legal and policy
contexts, where comprehensive datasets may be unavailable or infeasible
to collect---such as when evaluating the novelty of a small set of
AI-generated works or comparing a new trademark to a limited set of
existing marks. Thus, our method's combination of sensitivity,
statistical rigor, and quantitative interpretability makes it a strong
candidate for real-world applications.

Having established this statistical foundation, we now apply the
methodology to the more complex and legally salient domain of
AI-generated art. We note that while this validation used LeNet-5
embeddings and a Gaussian RBF kernel, the framework's flexibility allows
for alternative choices, the impact of which may warrant future
investigation.

\section{AI-Generated Art -- Distinguishing Human and Machine
Creativity}\label{ai-generated-art-distinguishing-human-and-machine-creativity}

Having established the validity and sensitivity of our MMD-based
methodology in the controlled environment of the MNIST dataset, we now
turn to a more complex and nuanced real-world application:
distinguishing between human-created and AI-generated art. Whereas MNIST
demonstrated the method's effectiveness in a domain with clear classes,
art is inherently subjective, stylistically diverse, and lacks simple
ground truth, presenting a far greater challenge for automated analysis.
This application, therefore, directly addresses the core research
question: Can our MMD-based approach statistically distinguish
AI-generated art distributions from human-created art distributions,
even when visual differences become increasingly subtle? Answering this
has direct implications for legal questions surrounding the originality
and distinctiveness of AI outputs.

\subsection{The AI-ArtBench Dataset and
Categories}\label{the-ai-artbench-dataset-and-categories}

To investigate this question, we utilize the AI-ArtBench dataset
(\citeproc{ref-silva2024artbrain}{Silva et al. 2024}), a comprehensive
collection designed specifically for studying AI-generated imagery. It
comprises 185,015 artistic images spanning 10 distinct art styles (e.g.,
Impressionism, Surrealism, Art Nouveau). Crucially for our study, this
dataset includes both human-created artworks (60,000 images derived from
the rigorously curated ArtBench-10 dataset
(\citeproc{ref-liao2022artbench}{Liao et al. 2022})) and AI-generated
images (125,015 images) produced using text prompts based on the human
artworks. The AI images were generated by two different, prominent
diffusion models:

\begin{itemize}
\tightlist
\item
  \textbf{Standard Diffusion (SD):} A widely used diffusion model
  operating in pixel space.
\item
  \textbf{Latent Diffusion (LD):} A more recent diffusion model
  operating in a lower-dimensional latent space, often associated with
  higher perceived quality and diversity.
\end{itemize}

For our analysis, we categorize the images into three distinct groups:
\texttt{Human} (original human artworks), \texttt{AI\ (SD)} (images
generated by Standard Diffusion), and \texttt{AI\ (LD)} (images
generated by Latent Diffusion). The inclusion of two distinct AI
generation methods is important: it allows us not only to compare
AI-generated art to human art but also to test whether our MMD
methodology is sensitive enough to detect potential distributional
differences \emph{between} different AI generation techniques
themselves.

The AI-ArtBench dataset is particularly valuable because recent research
using it has shown that humans struggle to reliably distinguish between
human and AI-generated art, achieving only approximately 58\% accuracy
in an ``Artistic Turing Test'' (\citeproc{ref-silva2024artbrain}{Silva
et al. 2024}). This highlights the increasingly blurred line between
human and machine creativity in the visual arts, at least to the human
eye, and underscores the need for robust, quantitative methods capable
of detecting potential underlying distributional differences.

\subsection{Embedding with CLIP for Semantic
Representation}\label{embedding-with-clip-for-semantic-representation}

Unlike the MNIST dataset, where a specialized CNN (LeNet-5) trained
specifically for digit recognition was appropriate, analyzing art
requires capturing more complex visual styles, themes, and semantic
content. Simple pixel-level comparisons or features learned for narrow
tasks are insufficient. Moreover, in realistic legal and policy
contexts, labeled datasets specifically tailored to distinguish
AI-generated art from human-created art are typically unavailable or
prohibitively expensive to create. Consequently, training a specialized
embedding model from scratch for each new comparison would be
impractical.

To address both the need for semantic richness and this practical
constraint of data availability, we employ a general-purpose embedding
method using the CLIP (Contrastive Language-Image Pre-training) model
(\citeproc{ref-radford2021learning}{Radford et al. 2021}). CLIP is a
powerful neural network architecture pre-trained on a massive dataset of
image-text pairs, learning representations that align visual and textual
concepts. Its pre-trained nature allows it to be applied
``off-the-shelf'' without requiring bespoke, task-specific training
data. Furthermore, CLIP embeddings capture both visual features and
higher-level semantic information, positioning images with similar
styles, subjects, and artistic concepts closer together in the embedding
space. This combination of practical applicability and semantic depth is
crucial for capturing the nuances of artistic expression needed for our
distributional analysis.

Specifically, we utilize the \texttt{ViT-H-14-quickgelu} variant of
CLIP, pre-trained on the large-scale \texttt{dfn5b} dataset, accessed
via the \texttt{open\_clip} library. This model offers a strong balance
between representational power and computational feasibility. We process
each selected image from the AI-ArtBench dataset through the pre-trained
CLIP image encoder. The output for each image is its corresponding
embedding vector, which we normalize to unit length. These normalized
embeddings, which are 1024-dimensional vectors, serve as the input data
points for our subsequent MMD analysis. The reliance on such a
pre-trained, general embedding makes our MMD framework readily
applicable for assessing distinctiveness across diverse image sets
without requiring domain-specific fine-tuning---a key advantage for
timely legal and policy evaluations where the ability to quickly and
reliably compare new image sources is paramount.

\subsection{MMD Analysis Procedure and
Setup}\label{mmd-analysis-procedure-and-setup-1}

Following the successful validation on MNIST, we apply the same core MMD
methodology to the AI-ArtBench embeddings, adapting the procedure for
the three categories (\texttt{Human}, \texttt{AI\ (SD)},
\texttt{AI\ (LD)}). The key steps are as follows:

\begin{enumerate}
\def\labelenumi{\arabic{enumi}.}
\item
  \textbf{Data Loading and Sampling:} We load images from the
  \texttt{test} split of the AI-ArtBench dataset, specifically targeting
  the directories corresponding to our three categories (\texttt{Human},
  \texttt{AI\ (SD)}, \texttt{AI\ (LD)}). To ensure balanced comparisons
  between categories and manage computational load for embedding
  extraction and MMD calculations, we randomly sample (without
  replacement) a maximum of 3000 images per category, resulting in a
  total dataset of up to 9000 images (3000 per category). This provides
  a substantial yet manageable dataset for analysis.
\item
  \textbf{Embedding Extraction:} As described previously, we pass each
  of the sampled images through the pre-trained CLIP model
  (\texttt{ViT-H-14-quickgelu}, \texttt{dfn5b} pre-training) to obtain
  its normalized 1024-dimensional embedding vector. This results in
  three distinct sets of embedding vectors, one for each category.
\item
  \textbf{Pairwise MMD Calculation and Hypothesis Testing:} We compute
  the unbiased MMD statistic (using the identical Gaussian RBF kernel
  with the median heuristic for bandwidth selection as in the MNIST
  study) between all unique pairs of categories: \texttt{Human}
  vs.~\texttt{AI\ (SD)}, \texttt{Human} vs.~\texttt{AI\ (LD)}, and
  \texttt{AI\ (SD)} vs.~\texttt{AI\ (LD)}. We also compute the MMD for
  each category against itself (\texttt{Human} vs.~\texttt{Human}, etc.)
  by splitting the category's samples into two random halves; these
  serve as crucial negative controls. In these negative-control
  comparisons, we expect near-zero MMD values and non-significant
  results, confirming that the method does not falsely detect
  differences when comparing identical distributions. For the main
  pairwise comparisons used in the heatmap (Figure
  \ref{fig:art_mmd_heatmap}), we use a sample size capped at n=400 per
  category (drawn from the available 3000) for computational efficiency,
  consistent with the MNIST heatmap approach. For each comparison, we
  perform the permutation-based hypothesis test (Algorithm 1) with
  P=2500 permutation iterations and a significance level of
  \(\alpha=0.01\) to determine if the observed MMD is statistically
  significant, evaluating the null hypothesis \(H_0: P=Q\).
\item
  \textbf{Sample Size Variation and Rejection Rate Estimation:} To
  assess the method's sensitivity and data efficiency in this more
  complex domain, we repeat the MMD calculation and permutation test for
  the three \emph{off-diagonal} pairwise comparisons (\texttt{Human}
  vs.~\texttt{AI\ (SD)}, etc.) across a range of small sample sizes: n =
  4, 5, 6, 7, 8, 9, 10, 12, 16, and 24. Similar to the MNIST procedure,
  we perform 100 independent trials for each pair at each sample size,
  averaging the test outcomes to estimate the rejection rate
  (statistical power) as plotted in Figure \ref{fig:art_rejection_rate}.
  The maximum sample size per category for these trials is capped at the
  overall heatmap cap (n=400) if the specified sample size \texttt{n}
  exceeds it.
\end{enumerate}

This structured procedure allows us to rigorously test whether the
distributions of human and AI-generated art, as represented by CLIP
embeddings, are statistically distinguishable, and how much data is
required to reliably detect such differences.

\subsection{Results: AI-ArtBench Study}\label{results-ai-artbench-study}

Applying the described MMD analysis procedure yields clear quantitative
evidence regarding the distributional differences between human-created
and AI-generated art within the AI-ArtBench dataset, as represented by
CLIP embeddings.

Figure \ref{fig:art_mmd_heatmap} presents the 3×3 MMD heatmap,
summarizing the pairwise comparisons between the \texttt{Human},
\texttt{AI\ (SD)}, and \texttt{AI\ (LD)} categories using a sample size
of \(n=400\) per category. As expected, the diagonal elements (negative
controls) show MMD statistics very close to zero (ranging from -0.0001
to 0.0005) and are uniformly non-significant (p-values \textgreater{}
0.14), confirming the test's reliability under the null hypothesis.

\begin{figure}[htbp]
\centering
\includegraphics[width=\linewidth]{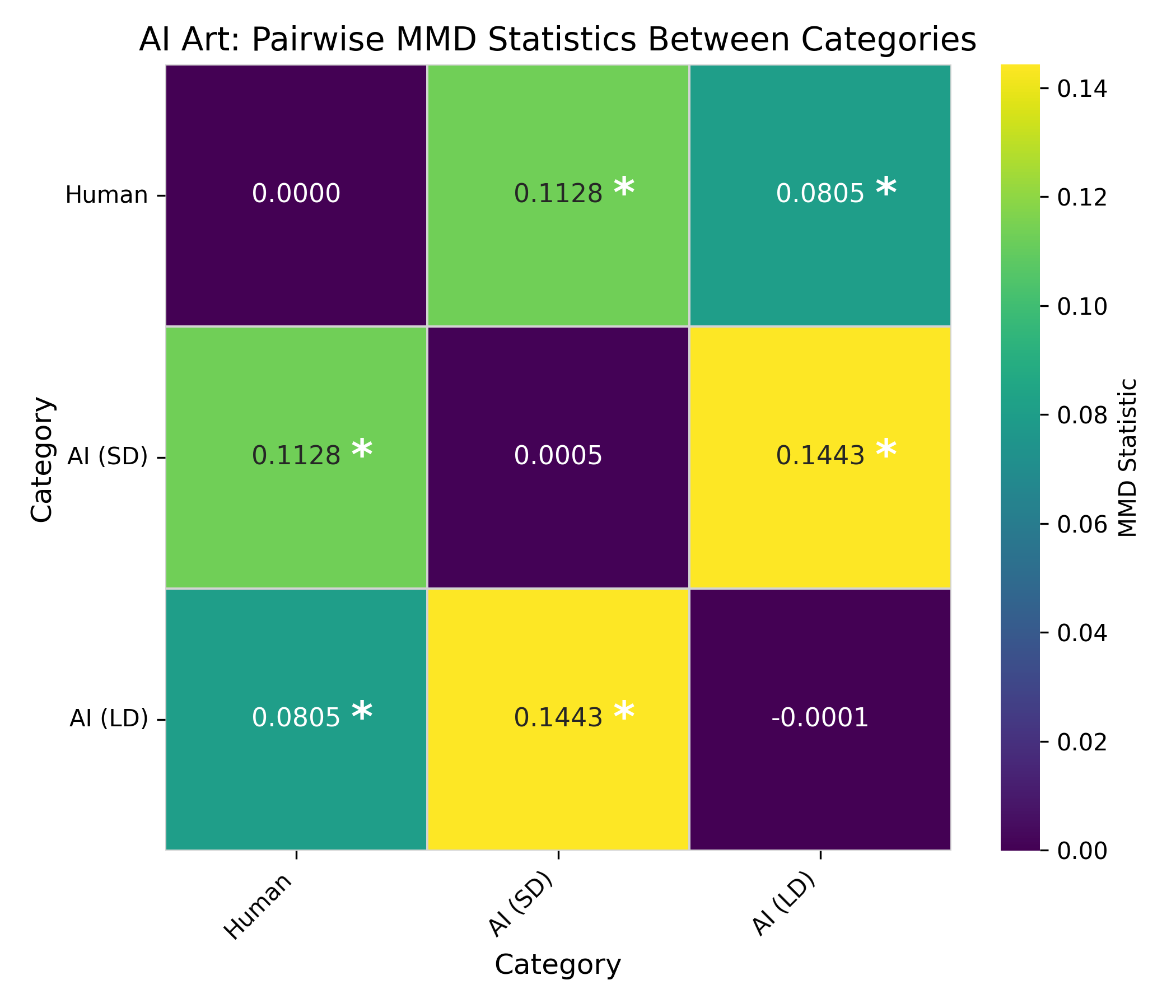}
\caption{Heatmap of MMD Statistics for AI-ArtBench Categories (Sample Size \(n=400\)).}
\label{fig:art_mmd_heatmap}
\begin{minipage}{0.95\linewidth}
\footnotesize
Note: `Human' indicates original human artworks, `AI (SD)' indicates images generated by Standard Diffusion, and `AI (LD)' indicates images generated by Latent Diffusion. Diagonal cells (negative controls) show near-zero, non-significant MMD values (p-values range from 0.1448 to 0.5284). All off-diagonal cells show statistically significant differences (p < 0.0001, marked with *), with MMD magnitudes reflecting the degree of distributional dissimilarity based on 1024-dim CLIP embeddings.
\end{minipage}
\end{figure}

Specifically, both comparisons between human-created art and
AI-generated art yield statistically significant MMD values: the
comparison between \texttt{Human} and \texttt{AI\ (SD)} produces an MMD
of 0.1128 (p \textless{} 0.0001), and the comparison between
\texttt{Human} and \texttt{AI\ (LD)} yields an MMD of 0.0805 (p
\textless{} 0.0001). These results indicate that the distributions of
CLIP embeddings for both AI models are statistically distinguishable
from the distribution of human-created art. Notably, these distinctions
are evident for both SD and LD images---even though humans struggle to
visually distinguish them from human-made artwork, achieving only
approximately 58\% accuracy in the Artistic Turing Test
(\citeproc{ref-silva2024artbrain}{Silva et al. 2024}). This underscores
the sensitivity of our distributional approach, capable of detecting
subtle semantic differences that may elude direct human perception.

Interestingly, the magnitude of the differences between human-created
art and each AI-generated category is remarkably similar (0.0805
vs.~0.1128), suggesting that within the CLIP embedding space, both AI
generation methods diverge from the human art distribution to a
comparable extent. Furthermore, the comparison between the two AI models
(\texttt{AI\ (SD)} vs.~\texttt{AI\ (LD)}) also yields a statistically
significant difference (MMD = 0.1443, p \textless{} 0.0001). However,
this MMD value is larger than the human-AI differences. This suggests
the two AI generation processes, while both distinct from human art, are
more dissimilar from each other in the CLIP embedding space than either
is to the human-created art distribution. This finding highlights the
method's sensitivity in capturing nuanced differences even between
different generative processes.

Figure \ref{fig:art_rejection_rate} further explores the sensitivity and
data efficiency of the MMD test by showing the rejection rate
(\(H_0: P=Q\)) as a function of sample size for the three pairwise
comparisons. The rejection rate increases rapidly with sample size for
all three comparisons, quickly approaching 100\%. Remarkably, sample
sizes ranging from n=7 (for Human vs.~AI SD) to n=10 (for Human vs.~AI
LD) images per category are sufficient to reliably distinguish between
the pairs (\texttt{Human} vs.~\texttt{AI\ (SD)}, \texttt{Human}
vs.~\texttt{AI\ (LD)}, and \texttt{AI\ (SD)} vs.~\texttt{AI\ (LD)}) with
over 95\% confidence (\(p < 0.01\)). Although this convergence
(requiring 7-10 samples here) is slightly slower than observed in the
simpler MNIST domain (where only 5--6 samples were required), this
difference is expected given the greater complexity, subtlety, and
subjective variability inherent in artistic images. Nonetheless,
achieving reliable statistical discrimination with fewer than a dozen
samples per category remains exceptionally data-efficient, underscoring
the practical utility of our method in real-world scenarios where data
availability may be limited.

\begin{figure}[htbp]
\centering
\includegraphics[width=\linewidth]{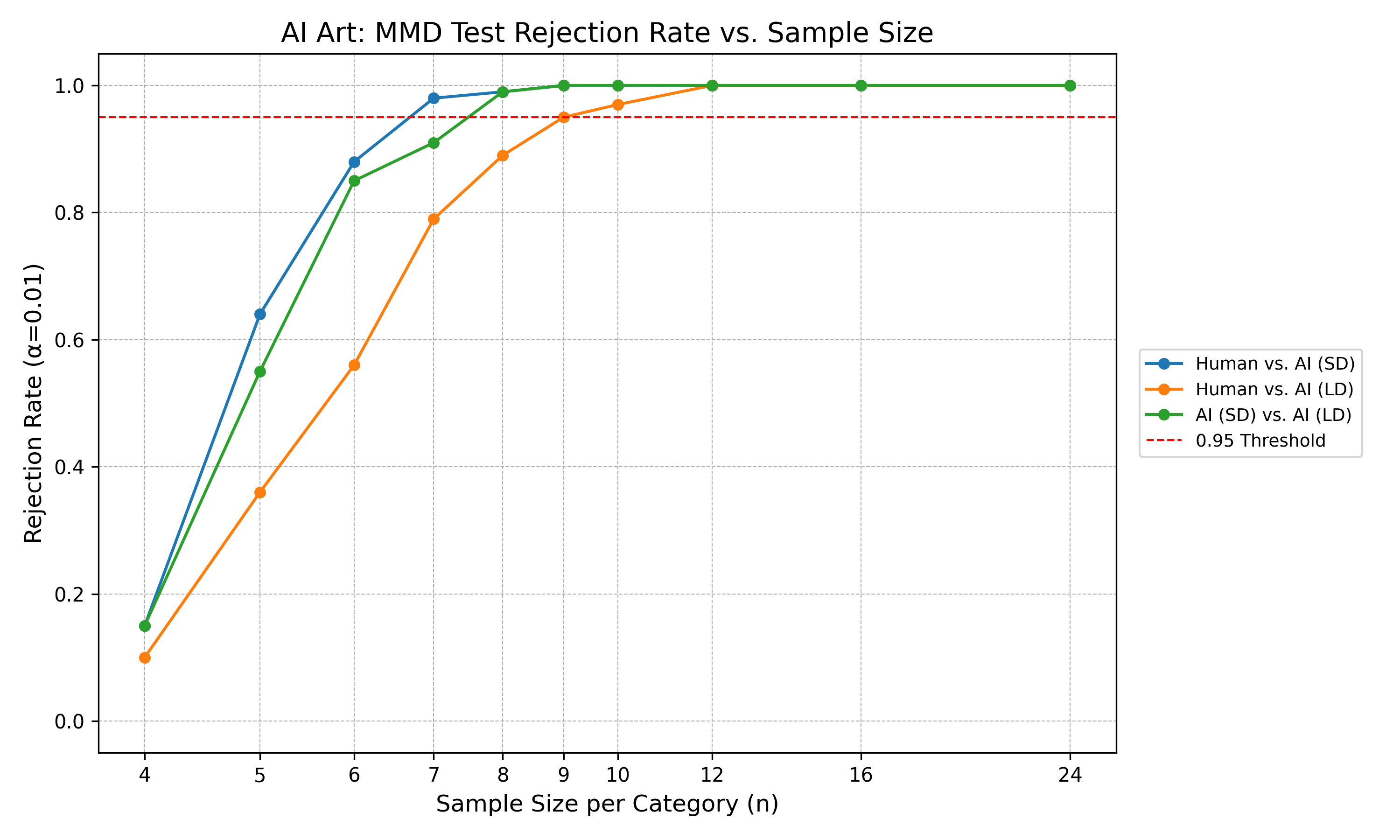}
\caption{Rejection Rate vs. Sample Size for AI-ArtBench Category Comparisons}
\label{fig:art_rejection_rate}
\begin{minipage}{0.95\linewidth}
\footnotesize
Note: `Human' indicates original human artworks, `AI (SD)' indicates images generated by Standard Diffusion, and `AI (LD)' indicates images generated by Latent Diffusion. Each line represents the proportion of null hypothesis rejections (\(H_0: P=Q\)) at \(\alpha=0.01\), estimated over 100 independent trials per point. The dashed line at 0.95 highlights rapid achievement of high statistical power; all pairs reach >95\% rejection rate with only n=7-10 samples per category.
\end{minipage}
\end{figure}

\subsection{Conclusions: Distinguishing Human and Machine
Creativity}\label{conclusions-distinguishing-human-and-machine-creativity}

These results provide strong quantitative evidence that, within the
semantic space captured by CLIP embeddings, AI-generated art (from both
SD and LD models) forms distributions that are statistically distinct
from the distribution of human-created art in the AI-ArtBench dataset.
Whereas the MNIST study demonstrated the method's effectiveness in a
simpler domain with clearly defined classes, the success on AI-ArtBench
underscores the method's robustness in a highly subjective and
stylistically diverse creative domain. This finding directly challenges
simplistic notions of AI art as merely a ``stochastic parrot'' perfectly
mimicking human creativity; while trained on human data, the resulting
output distributions exhibit measurable differences.

Furthermore, the ability to distinguish between the two AI models,
albeit with a smaller MMD, highlights the potential for this methodology
to track and characterize the outputs of evolving generative techniques.
The use of powerful semantic embeddings like CLIP is crucial for
capturing the relevant stylistic and content nuances of art. Combined
with the MMD framework, this provides a sensitive, data-efficient, and
statistically rigorous tool for analyzing the distributional properties
of AI-generated content. The fact that these statistically significant
differences are readily detectable with very small sample sizes
(\(n=7-10\)), even when human visual discrimination is poor
(\textasciitilde58\% accuracy), emphasizes the power of distributional
analysis and its potential relevance for legal and policy discussions
regarding AI novelty and originality. This application demonstrates the
practical utility of our methodology for complex, real-world problems,
offering a foundation for further research into the nature of AI
creativity and its implications across various domains, including
intellectual property and art authentication. We next discuss how these
findings inform broader policy questions regarding AI originality and
distinctiveness.

\section{General Discussion}\label{general-discussion}

This paper develops and validates a novel, distribution-based
methodology for quantifying the novelty, originality, and
distinctiveness of a set of content given prior art--a baseline set of
content. Our approach addresses a fundamental challenge in current legal
frameworks: while IP law relies heavily on concepts of novelty,
distinctiveness, and originality, traditional assessment methods based
on pairwise comparisons or simple aggregations like average similarity
are fundamentally inadequate for evaluating AI outputs against an
effectively unbounded body of prior art. Unlike methods focusing on
individual item similarity, our MMD-based framework captures differences
in the underlying \emph{distributions} of creative processes. By
combining kernel mean embeddings, maximum mean discrepancy, and
domain-specific machine learning embeddings, we provide courts and
policymakers with a principled alternative that aligns with established
legal principles while accommodating the unique challenges posed by
generative AI.

Our methodology offers three key advantages that make it particularly
valuable for legal applications: it requires no model-specific training,
making it adaptable to evolving AI technologies; it operates effectively
with limited samples, addressing the practical reality that
comprehensive datasets are often unavailable in legal contexts; and it
yields statistically rigorous measures of distributional difference that
can inform legal determinations of originality and distinctiveness.
These properties enable the method to serve as a quantitative tool for
courts and IP offices evaluating AI-generated works, providing an
empirical foundation for legal reasoning that traditionally relies on
more subjective assessments. This approach, in line with Prakken
(\citeproc{ref-prakken2010abstract}{2010})'s advocacy for formal,
argument-based reasoning in legal analysis, demonstrates how
computational methods can systematize legal analysis by offering a more
objective framework for decision-making.

Through rigorous empirical validation, we provide compelling evidence
that AI-generated outputs can be statistically distinct from prior art,
even when the AI is explicitly prompted to generate content that
maximizes commercial viability and thus should encourage similarity to
successful precedents. This finding directly challenges the ``stochastic
parrot'' critique that has significantly influenced legal discourse
surrounding AI creativity and has been cited in ongoing copyright
litigation. Our results demonstrate that modern AI systems do not merely
mimic training data but produce semantically distinct outputs that may
warrant legal recognition.

Through rigorous empirical validation, we provide compelling evidence
that AI-generated outputs can be statistically distinct from prior art,
even when the AI is explicitly prompted to generate content that
maximizes commercial viability and thus should encourage similarity to
successful precedents. This finding directly challenges the ``stochastic
parrot'' critique that has significantly influenced legal discourse
surrounding AI creativity and has been cited in ongoing copyright
litigation. Our results demonstrate that modern AI systems do not merely
mimic training data but produce semantically distinct
outputs.\footnote{This statistical distinctiveness must be distinguished
  from \emph{legal originality} or \emph{transformativeness}. While our
  method provides objective evidence against claims of mere mimicry,
  legal determinations hinge on additional factors, including specific
  authorship requirements, the nature of the creative contribution, the
  idea/expression dichotomy, and fair use considerations. MMD offers
  valuable quantitative evidence \emph{for} these legal assessments,
  rather than replacing them.}

The implications of these findings are profound for current intellectual
property regimes. Existing frameworks, predicated on human authorship
and creativity, struggle to accommodate AI-generated works that exhibit
measurable novelty without direct human creative intent. Our
category-specific analysis further underscores this point, revealing
that AI's creative tendencies vary systematically across domains, with
stronger alignment to certain creative fields (e.g., fiction) than
others (e.g., culinary arts)---a finding with direct relevance to
domain-specific IP protections. Indeed, as AI-generated content
increasingly challenges traditional legal concepts of authorship and
originality, scholars have argued that integrating artificial
intelligence into legal reasoning itself may necessitate rethinking
fundamental legal doctrines and frameworks
(\citeproc{ref-verheij2020artificial}{Verheij 2020}). For trademark law,
our analysis of brand name distinctiveness addresses emerging concerns
about AI-generated marks. For copyright law, our methodology provides a
quantitative approach to assessing the traditionally qualitative concept
of originality. For patent law, it offers a potential tool for
evaluating non-obviousness in AI-generated inventions.\footnote{However,
  our findings that AI can produce statistically distinct outputs
  intersect with profound challenges to the existing non-obviousness
  standard itself. As scholars like Abbott
  (\citeproc{ref-abbott2019everything}{2019}) argue, if AI systems
  become standard tools for innovation, the benchmark ``person having
  ordinary skill in the art'' (PHOSITA) may need to be redefined to
  incorporate AI capabilities. The very definition of what is
  ``obvious'' relative to an AI-augmented PHOSITA may need
  re-evaluation, as AI improves and increasingly renders innovative
  activities ``obvious''. Our quantitative evidence of AI's capacity for
  generating distinct outputs lends empirical weight to the urgency of
  addressing how the non-obviousness doctrine should adapt to
  technologies that can systematically explore and generate solutions
  previously considered inventive.}

While this study provides strong evidence for AI's capacity for novelty,
we acknowledge limitations relevant to legal applications. The
effectiveness of our method depends on the quality of the chosen
embedding and kernel function, paralleling how legal determinations
depend on the frameworks used to evaluate creative works. Our analysis
also treated prior art as static; in reality, prior art is dynamic,
especially as AI-generated content increasingly enters the public
domain---a complexity that future legal frameworks must address.
Additionally, our method measures distinctiveness but does not directly
assess other legally relevant factors such as creative value or intent.
Translating MMD scores into discrete legal judgments will require
further consideration and the development of context-specific
guidelines.

Future research at this critical intersection should prioritize: (1)
establishing threshold MMD values that correspond to legal standards of
originality and distinctiveness across different IP domains; (2)
exploring how this methodology can be adapted to assess the novelty of
works created through human-AI collaboration, which present particularly
complex questions of authorship; (3) investigating how courts might
incorporate distributional evidence of novelty within existing legal
tests, addressing the interpretability challenges; and (4) developing
comprehensive legal frameworks that appropriately balance recognition of
AI's novel contributions with the broader social and economic goals of
intellectual property protection.

By providing both a robust methodological foundation and compelling
empirical evidence, this work contributes to a more nuanced
understanding of AI as a creative force within legal frameworks. As
courts and policymakers continue to grapple with rapid advancements in
generative AI capabilities, our approach offers a principled analytical
tool to help ground legal debates in quantitative evidence, ensuring
intellectual property law evolves in ways that accurately reflect
technological realities while preserving its fundamental purposes of
incentivizing innovation and creative expression.

\newpage

\section{Bibliography}\label{bibliography}

\singlespacing

\phantomsection\label{refs}
\begin{CSLReferences}{1}{1}
\bibitem[\citeproctext]{ref-abbott2019everything}
Abbott RB (2019)
\href{https://www.uclalawreview.org/everything-is-obvious/}{Everything
is obvious}. UCLA L Rev 66:2

\bibitem[\citeproctext]{ref-adarsh2024automating}
Adarsh S, Ash E, Bechtold S, et al (2024) Automating abercrombie:
Machine-learning trademark distinctiveness. Journal of Empirical Legal
Studies 21:826--860

\bibitem[\citeproctext]{ref-bender2021dangers}
Bender EM, Gebru T, McMillan-Major A, Shmitchell S (2021) On the dangers
of stochastic parrots: Can language models be too big? In: Proceedings
of the 2021 ACM conference on fairness, accountability, and
transparency. pp 610--623

\bibitem[\citeproctext]{ref-berlinet2011reproducing}
Berlinet A, Thomas-Agnan C (2011) Reproducing kernel hilbert spaces in
probability and statistics. Springer Science \& Business Media

\bibitem[\citeproctext]{ref-bridy2012coding}
Bridy A (2012) Coding creativity: Copyright and the artificially
intelligent author. Stan Tech L Rev 5

\bibitem[\citeproctext]{ref-chalkidis2019deep}
Chalkidis I, Kampas D (2019) Deep learning in law: Early adaptation and
legal word embeddings trained on large corpora. Artificial Intelligence
and Law 27:171--198

\bibitem[\citeproctext]{ref-chang2023speak}
Chang KK, Chen M, Lee DT, et al (2023)
\href{https://arxiv.org/abs/2305.00118}{Speak, memory: An archaeology of
books known to {ChatGPT/GPT-4}}. arXiv preprint arXiv:230500118

\bibitem[\citeproctext]{ref-chisum2022patents}
Chisum DS (2022) Chisum on patents: A treatise on the law of
patentability, validity \& infringement. LexisNexis

\bibitem[\citeproctext]{ref-Copyleaks_2024}
Copyleaks (2024)
\href{https://copyleaks.com/about-us/media/copyleaks-research-finds-nearly-60-of-gpt-3-5-outputs-contained-some-form-of-plagiarized-content}{Copyleaks
research finds nearly 60}. Copyleaks

\bibitem[\citeproctext]{ref-degli2020use}
Degli Esposti M, Lagioia F, Sartor G (2020) The use of copyrighted works
by AI systems: Art works in the data mill. European Journal of Risk
Regulation 11:51--69

\bibitem[\citeproctext]{ref-diakopoulos2023memorized}
Diakopoulos N (2023)
\href{https://generative-ai-newsroom.com/finding-evidence-of-memorized-news-content-in-gpt-models-d11a73576d2}{Finding
evidence of memorized news content in {GPT} models}. Generative AI in
the Newsroom

\bibitem[\citeproctext]{ref-ginsburg2019authors}
Ginsburg JC, Budiardjo LA (2019) Authors and machines. Berkeley Tech LJ
34:343

\bibitem[\citeproctext]{ref-gretton2012kernel}
Gretton A, Borgwardt KM, Rasch MJ, et al (2012) A kernel two-sample
test. The Journal of Machine Learning Research 13:723--773

\bibitem[\citeproctext]{ref-grimmelmann2015there}
Grimmelmann J (2015) There's no such thing as a computer-authored
work-and it's a good thing, too. Colum JL \& Arts 39:403

\bibitem[\citeproctext]{ref-guadamuz2016monkey}
Guadamuz A (2016) The monkey selfie: Copyright lessons for originality
in photographs and internet jurisdiction. Internet Policy Review 5:1--12

\bibitem[\citeproctext]{ref-Ji2023}
Ji Z, Lee N, Frieske R, et al (2023) Survey of hallucination in natural
language generation. ACM Computing Surveys 55:1--38 (Article 248).
\url{https://doi.org/10.1145/3571730}

\bibitem[\citeproctext]{ref-lake2023human}
Lake BM, Baroni M (2023) Human-like systematic generalization through a
meta-learning neural network. Nature 623:115--121

\bibitem[\citeproctext]{ref-lecun1998gradient}
LeCun Y, Bottou L, Bengio Y, Haffner P (1998) Gradient-based learning
applied to document recognition. Proceedings of the IEEE 86:2278--2324

\bibitem[\citeproctext]{ref-lee2022deduplicating}
Lee K, Ippolito D, Nystrom A, et al (2022)
\href{https://aclanthology.org/2022.acl-long.577}{Deduplicating training
data makes language models better}. Proceedings of the 60th Annual
Meeting of the Association for Computational Linguistics (Volume 1: Long
Papers)

\bibitem[\citeproctext]{ref-lemley2023generative}
Lemley MA (2023) How generative AI turns copyright law on its head.
Available at SSRN 4517702

\bibitem[\citeproctext]{ref-liao2022artbench}
Liao P, Li X, Liu X, Keutzer K (2022) The artbench dataset: Benchmarking
generative models with artworks. arXiv preprint arXiv:220611404

\bibitem[\citeproctext]{ref-lin2024evaluating}
Lin E, Peng Z, Fang Y (2024)
\href{https://arxiv.org/abs/2409.16605}{Evaluating and enhancing large
language models for novelty assessment in scholarly publications}. arXiv
preprint arXiv:240916605

\bibitem[\citeproctext]{ref-marcus2019rebooting}
Marcus G, Davis E (2019) Rebooting AI: Building artificial intelligence
we can trust. Vintage

\bibitem[\citeproctext]{ref-mccarthy2025trademarks}
McCarthy JT (2025)
\href{https://store.legal.thomsonreuters.com/law-products/Practitioner-Treatises/McCarthy-on-Trademarks-and-Unfair-Competition-5th-2024-ed/p/107022464}{McCarthy
on trademarks and unfair competition}, 5th edn. Thomson West, Eagan, MN

\bibitem[\citeproctext]{ref-mccoy2023much}
McCoy RT, Smolensky P, Linzen T, et al (2023) How much do language
models copy from their training data? Evaluating linguistic novelty in
text generation using raven. Transactions of the Association for
Computational Linguistics 11:652--670

\bibitem[\citeproctext]{ref-mikolov2013efficient}
Mikolov T, Chen K, Corrado G, Dean J (2013) Efficient estimation of word
representations in vector space. arXiv preprint arXiv:13013781

\bibitem[\citeproctext]{ref-muandet2017kernel}
Muandet K, Fukumizu K, Sriperumbudur B, et al (2017) Kernel mean
embedding of distributions: A review and beyond. Foundations and
Trends{\textregistered} in Machine Learning 10:1--141

\bibitem[\citeproctext]{ref-mukherjee2024safeguarding}
Mukherjee A (2024) Safeguarding marketing research: The generation,
identification, and mitigation of AI-fabricated disinformation. arXiv
preprint arXiv:240314706

\bibitem[\citeproctext]{ref-mukherjee2023managing}
Mukherjee A, Chang HH (2023) Managing the creative frontier of
generative AI: The novelty-usefulness tradeoff. California Management
Review

\bibitem[\citeproctext]{ref-carlini2023extracting}
Nasr M, Carlini N, Hayase J, et al (2023)
\href{https://arxiv.org/abs/2311.17035}{Scalable extraction of training
data from (production) language models}. arXiv preprint arXiv:231117035

\bibitem[\citeproctext]{ref-nimmer2023copyright}
Nimmer MB, Nimmer D (2023) Nimmer on copyright: A treatise on the law of
literary, musical, and artistic property, and the protection of ideas.
LexisNexis

\bibitem[\citeproctext]{ref-prakken2010abstract}
Prakken H (2010) An abstract framework for argumentation with structured
arguments. Argument \& Computation 1:93--124

\bibitem[\citeproctext]{ref-radford2021learning}
Radford A, Kim JW, Hallacy C, et al (2021) Learning transferable visual
models from natural language supervision. In: International conference
on machine learning. PmLR, pp 8748--8763

\bibitem[\citeproctext]{ref-vsavelka2022legal}
Šavelka J, Ashley KD (2022) Legal information retrieval for
understanding statutory terms. Artificial Intelligence and Law 1--45

\bibitem[\citeproctext]{ref-schafer2015fourth}
Schafer B, Komuves D, Zatarain JMN, Diver L (2015) A fourth law of
robotics? Copyright and the law and ethics of machine co-production.
Artificial Intelligence and Law 23:217--240

\bibitem[\citeproctext]{ref-shawe2004kernel}
Shawe-Taylor J, Cristianini N (2004) Kernel methods for pattern
analysis. Cambridge university press

\bibitem[\citeproctext]{ref-silva2024artbrain}
Silva RSR, Lotfi A, Ihianle IK, et al (2024) ArtBrain: An explainable
end-to-end toolkit for classification and attribution of AI-generated
art and style. arXiv preprint arXiv:241201512

\bibitem[\citeproctext]{ref-sriperumbudur2010hilbert}
Sriperumbudur BK, Gretton A, Fukumizu K, et al (2010) Hilbert space
embeddings and metrics on probability measures. The Journal of Machine
Learning Research 11:1517--1561

\bibitem[\citeproctext]{ref-stammbach2021docscan}
Stammbach D, Ash E (2021) Docscan: Unsupervised text classification via
learning from neighbors. arXiv preprint arXiv:210504024

\bibitem[\citeproctext]{ref-steinwart2008support}
Steinwart I, Christmann A (2008) Support vector machines. Springer
Science \& Business Media

\bibitem[\citeproctext]{ref-sun2021redesigning}
Sun H (2021) Redesigning copyright protection in the era of artificial
intelligence. Iowa L Rev 107:1213

\bibitem[\citeproctext]{ref-surden2018artificial}
Surden H (2018) Artificial intelligence and law: An overview. Ga St UL
Rev 35:1305

\bibitem[\citeproctext]{ref-verheij2020artificial}
Verheij B (2020) Artificial intelligence as law: Presidential address to
the seventeenth international conference on artificial intelligence and
law. Artificial intelligence and law 28:181--206

\bibitem[\citeproctext]{ref-villasenor2023ten}
Villasenor J (2023) Ten thousand AI systems typing on keyboards:
Generative AI in patent applications and preemptive prior art. Vand J
Ent \& Tech L 26:375

\bibitem[\citeproctext]{ref-wan2021copyright}
Wan Y, Lu H (2021) Copyright protection for AI-generated outputs: The
experience from china. Computer Law \& Security Review 42:105581

\end{CSLReferences}

\newpage

\setcounter{section}{0}
\renewcommand{\thesection}{\Alph{section}}
\renewcommand{\thesubsection}{\thesection.\arabic{subsection}}
\renewcommand{\thesubsubsection}{\thesubsection.\arabic{subsubsection}}
\renewcommand{\theparagraph}{\thesubsubsection.\arabic{paragraph}}
\renewcommand{\thesubparagraph}{\theparagraph.\arabic{subparagraph}}
\renewcommand{\thetable}{A\arabic{table}}

\section{Web Appendix A: Python Code
Implementation}\label{web-appendix-a-python-code-implementation}

This appendix provides the complete Python implementation used to
operationalize and validate the Maximum Mean Discrepancy (MMD)-based
methodology developed in this paper. The code directly supports the
empirical analyses presented in Section 3 (MNIST validation study) and
Section 4 (AI-generated art study using the AI-ArtBench dataset). It is
structured to ensure full reproducibility of our findings and to serve
as a practical, general-purpose tool that researchers can adapt for
quantitative novelty and distinctiveness analysis in other domains
relevant to legal, scientific, or creative inquiries.

The implementation is organized into five distinct sections:

\subsection{Section 1: Shared MMD and Permutation Test
Functions}\label{section-1-shared-mmd-and-permutation-test-functions}

This foundational section defines the core statistical engine
underpinning the entire analysis. These functions are domain-agnostic,
implementing the MMD-based hypothesis testing framework detailed
theoretically in Section 2 of the main paper. They provide the reusable
tools for comparing distributions based on sample data.

\begin{itemize}
\tightlist
\item
  \textbf{Key Functions:}

  \begin{itemize}
  \tightlist
  \item
    \texttt{\_compute\_sigma\_median\_heuristic(x,\ y)}: A helper
    function that automatically determines an appropriate bandwidth
    parameter (\(\sigma\)) for the Gaussian Radial Basis Function (RBF)
    kernel. It uses the median heuristic, a standard data-driven
    approach that adapts the kernel's sensitivity to the scale of the
    input embeddings.
  \item
    \texttt{mmd\_squared\_unbiased(x,\ y,\ kernel,\ sigma)}: Calculates
    the unbiased estimate of the squared MMD statistic. This is the core
    measure quantifying the distance between the probability
    distributions from which sample sets \texttt{x} and \texttt{y} are
    drawn. The function supports both the flexible RBF kernel (default)
    and a simpler linear kernel.
  \item
    \texttt{permutation\_test(x,\ y,\ P,\ kernel,\ sigma,\ alpha,\ n\_jobs)}:
    Implements the non-parametric permutation test described in
    Algorithm 1. This function assesses the statistical significance of
    the observed MMD value (\texttt{delta\_obs}) by comparing it to a
    distribution of MMD values computed under the null hypothesis (H0:
    P=Q). The null distribution is generated by repeatedly shuffling the
    combined data (\texttt{x} and \texttt{y}) and recalculating MMD
    (\texttt{P} times). It returns the p-value and determines whether to
    reject H0 at the specified significance level \texttt{alpha}.
    Parallel processing (\texttt{n\_jobs}) is used to accelerate the
    computationally intensive permutation process.
  \end{itemize}
\end{itemize}

\subsection{Section 2: MNIST Validation Study
Functions}\label{section-2-mnist-validation-study-functions}

This section contains all code specifically designed for the MNIST
validation study (Section 3 of the main paper). The purpose here is to
demonstrate the MMD methodology's effectiveness and statistical
properties (like data efficiency and control of false positives) in a
controlled environment where the ground truth is known (i.e., images of
different handwritten digits \emph{should} come from distinct
distributions).

\begin{itemize}
\tightlist
\item
  \textbf{Key Functions:}

  \begin{itemize}
  \tightlist
  \item
    \texttt{mnist\_load\_and\_prepare\_data()}: Handles loading the
    standard MNIST dataset, performing necessary preprocessing
    (normalization, reshaping), and splitting it into training,
    validation, and test sets.
  \item
    \texttt{mnist\_build\_lenet5\_model()}: Defines the LeNet-5
    convolutional neural network (CNN) architecture, a classic benchmark
    model for digit recognition. The output of its penultimate layer
    (\texttt{embedding\_layer}) is used to generate numerical vector
    representations (embeddings) of the digit images.
  \item
    \texttt{mnist\_train\_model(...)}: Trains the LeNet-5 model on the
    MNIST training data. Includes standard practices like data
    augmentation (to improve robustness), early stopping (to prevent
    overfitting), and model checkpointing (to save the best performing
    model).
  \item
    \texttt{mnist\_evaluate\_model(...)}: Assesses the trained model's
    accuracy and loss on the unseen test set, confirming its ability to
    distinguish digits.
  \item
    \texttt{mnist\_extract\_embeddings(...)}: Uses the trained LeNet-5
    model to convert the MNIST test images into 84-dimensional embedding
    vectors, suitable for MMD analysis.
  \item
    \texttt{mnist\_compute\_rejection\_rates(...)}: Systematically
    evaluates the MMD test's statistical power. It runs the
    \texttt{permutation\_test} repeatedly
    (\texttt{N\_TRIALS\_REJ\_RATE}) for specified digit pairs across a
    range of small sample sizes (\texttt{SAMPLE\_SIZES}) and calculates
    the proportion of times the null hypothesis is correctly rejected.
    This demonstrates the method's sensitivity with limited data.
  \item
    \texttt{mnist\_compute\_mmd\_matrix(...)}: Computes the full 10×10
    matrix containing the MMD statistic and corresponding p-value for
    every pair of digit classes (0-9). This includes diagonal
    comparisons (e.g., `3' vs `3') as negative controls.
  \item
    \texttt{mnist\_plot\_rejection\_rates(...)} \&
    \texttt{mnist\_plot\_mmd\_heatmap(...)}: Generate the key
    visualizations presented in the paper: the plot showing how
    rejection rates increase with sample size, and the heatmap
    illustrating the MMD values between all digit pairs, annotated with
    significance markers.
  \item
    \texttt{mnist\_print\_summary\_statistics(...)}: Outputs a formatted
    text table summarizing the MMD results, clearly distinguishing
    negative controls from pairwise comparisons and indicating
    statistical significance.
  \end{itemize}
\end{itemize}

\subsection{Section 3: AI Art Study
Functions}\label{section-3-ai-art-study-functions}

This section applies the validated MMD methodology to the more complex
and legally relevant domain of AI-generated art, using the AI-ArtBench
dataset (Section 4 of the main paper). It compares human-created art
with AI-generated art produced by two different diffusion models
(Standard Diffusion - SD, Latent Diffusion - LD).

\begin{itemize}
\tightlist
\item
  \textbf{Key Functions:}

  \begin{itemize}
  \tightlist
  \item
    \texttt{art\_load\_dataset(...)}: Loads images from the AI-ArtBench
    dataset directory structure, correctly identifying and categorizing
    images into `Human', `AI (SD)', and `AI (LD)' groups based on folder
    names. It includes sampling logic to handle potentially large
    datasets.
  \item
    \texttt{art\_extract\_clip\_embeddings(...)}: Extracts
    high-dimensional (1024-dim) semantic embeddings for each artwork
    using a powerful, pre-trained CLIP model
    (\texttt{ViT-H-14-quickgelu}). CLIP is chosen here because its
    embeddings capture richer semantic and stylistic information
    necessary for comparing complex visual art, unlike the simpler
    features sufficient for MNIST digits. Embeddings are normalized.
  \item
    \texttt{art\_compute\_mmd\_matrix(...)}: Computes the 3×3 matrix of
    pairwise MMD statistics and p-values between the `Human', `AI (SD)',
    and `AI (LD)' categories using their CLIP embeddings.
  \item
    \texttt{art\_compute\_rejection\_rates(...)}: Similar to the MNIST
    study, this calculates the rejection rate of the MMD test for the
    three crucial pairwise comparisons (Human vs.~AI SD, Human vs.~AI
    LD, AI SD vs.~AI LD) across the specified range of small sample
    sizes, again demonstrating data efficiency in this harder task.
  \item
    \texttt{art\_plot\_mmd\_heatmap(...)} \&
    \texttt{art\_plot\_rejection\_rates(...)}: Generate the
    visualizations for the AI Art study: the 3x3 MMD heatmap and the
    rejection rate curves for the category comparisons.
  \item
    \texttt{art\_print\_summary\_statistics(...)}: Outputs a formatted
    text table summarizing the MMD results for the AI Art comparisons.
  \end{itemize}
\end{itemize}

\subsection{Section 4: Main Execution
Block}\label{section-4-main-execution-block}

This section serves as the main script driver. It does not define new
functions but orchestrates the entire analysis workflow from start to
finish when the script is executed.

\begin{itemize}
\tightlist
\item
  \textbf{Workflow:}

  \begin{itemize}
  \tightlist
  \item
    \textbf{Configuration:} Sets crucial parameters for both studies
    (e.g., significance level \texttt{ALPHA}, sample size caps
    \texttt{HEATMAP\_SAMPLE\_CAP}, \texttt{REJ\_RATE\_SAMPLE\_CAP}, list
    of \texttt{SAMPLE\_SIZES}, number of permutation iterations
    \texttt{MNIST\_P}, \texttt{ART\_P}, file paths, model names) in a
    centralized block for easy modification.
  \item
    \textbf{Directory Setup:} Creates output directories
    (\texttt{mnist\_results}, \texttt{art\_results}) to store generated
    files (embeddings, results matrices, plots).
  \item
    \textbf{Study Execution:} Sequentially runs the MNIST study (calling
    functions from Section 2) and then the AI Art study (calling
    functions from Section 3).
  \item
    \textbf{Process Flow:} For each study, it follows a logical
    sequence: Load Data -\textgreater{} Train or Load Model (MNIST only)
    -\textgreater{} Extract Embeddings -\textgreater{} Compute MMD
    Matrix \& Rejection Rates -\textgreater{} Save Numerical Results
    -\textgreater{} Generate Plots -\textgreater{} Print Summary Tables.
  \item
    \textbf{Reproducibility:} Initializes random seeds for NumPy,
    TensorFlow, and Python's \texttt{random} module to ensure that the
    stochastic parts of the analysis (like data sampling and permutation
    tests) produce the same results when run again.
  \end{itemize}
\end{itemize}

\subsection{Section 5: Extract Specific Results for Exposition (Both
Studies)}\label{section-5-extract-specific-results-for-exposition-both-studies}

This final section acts as a bridge between the detailed numerical
outputs generated by the analysis and the key findings discussed in the
main body of the paper. It programmatically extracts and prints
specific, highly relevant values from the results variables created in
Sections 2 and 3, making it easy to verify the quantitative claims made
in the paper's discussion and conclusion sections by directly linking
them to the code's output.

\begin{itemize}
\tightlist
\item
  \textbf{Key Functions:}

  \begin{itemize}
  \tightlist
  \item
    \texttt{print\_mnist\_exposition\_summary()}: After the MNIST
    analysis, this function extracts and prints targeted results like
    the approximate sample size needed to achieve \textgreater95\%
    rejection rate for key digit pairs, the range of MMD/p-values for
    negative controls, the overall significance rate for distinct pairs,
    and the specific MMD values for the most and least similar digit
    pairs.
  \item
    \texttt{print\_art\_exposition\_summary()}: Similarly, after the AI
    Art analysis, this function extracts and prints the sample size
    needed for \textgreater95\% rejection rate for the Human vs.~AI and
    AI vs.~AI comparisons, the negative control ranges, and the specific
    MMD/p-values for each of the three crucial off-diagonal comparisons
    (Human vs.~SD, Human vs.~LD, SD vs.~LD).
  \end{itemize}
\end{itemize}

This implementation utilizes standard, open-source Python libraries
(NumPy, TensorFlow/Keras, PyTorch/OpenCLIP, Scikit-learn, Matplotlib,
Seaborn), promoting accessibility and ease of use. The modular structure
allows researchers to potentially adapt the code for different datasets
or embedding techniques by modifying the relevant data loading and
embedding extraction functions within Sections 2 or 3, while leveraging
the core MMD framework provided in Section 1.

\subsection{Python Code}\label{python-code}

\begin{minted}[
  linenos=true,
  breaklines=true,
  breakanywhere=true,
  fontsize=\footnotesize,
  baselinestretch=1.2,
  numbersep=5pt,
  bgcolor={gray!10},
  frame=single
]{python}

# ==================================================
# Section 1: Shared MMD and Permutation Test Functions
# ==================================================
# This section contains the core functions for calculating the
# Maximum Mean Discrepancy (MMD) and performing the permutation-based
# hypothesis test. These functions are utilized by both the
# MNIST and AI Art studies.

import numpy as np
import tensorflow as tf
from tensorflow import keras
from sklearn.metrics import pairwise_distances
import matplotlib.pyplot as plt
import seaborn as sns
import random
import os
from sklearn.model_selection import train_test_split
import glob
from PIL import Image
import torch
import open_clip
from tqdm import tqdm
from joblib import Parallel, delayed
from typing import Optional, Tuple, List, Dict
from tensorflow.keras.callbacks import History

# Set random seeds for reproducibility
np.random.seed(42)
tf.random.set_seed(42)
random.seed(42)
os.environ['PYTHONHASHSEED'] = str(42)

# --- Helper Function for Median Heuristic ---
def _compute_sigma_median_heuristic(x: np.ndarray, y: np.ndarray) -> float:
    """
    Computes the RBF kernel bandwidth sigma using the median heuristic.

    This is a common heuristic for selecting the bandwidth of the RBF kernel
    based on the pairwise distances between points in the combined dataset.
    It handles cases where the median distance is zero or non-finite by
    defaulting to 1.0.

    Args:
        x (np.ndarray): First sample (m x d).
        y (np.ndarray): Second sample (n x d).

    Returns:
        float: The computed bandwidth sigma, suitable for an RBF kernel.
               Returns 1.0 if the median distance is 0 or non-finite.
    """
    combined = np.concatenate([x, y], axis=0)
    distances = pairwise_distances(combined, combined, metric="euclidean")
    # Use median of non-zero distances
    sigma = np.median(distances[distances > 0])
    # Handle case where all distances are zero (e.g., identical small samples)
    if sigma == 0 or not np.isfinite(sigma):
        sigma = 1.0 # Default to 1 if median is 0 or invalid
    return sigma

# --- Core MMD Function ---
def mmd_squared_unbiased(x: np.ndarray, y: np.ndarray, kernel: str = "rbf", sigma: Optional[float] = None) -> float:
    """
    Computes the unbiased MMD squared statistic.

    Args:
        x (np.ndarray): First sample (m x d).
        y (np.ndarray): Second sample (n x d).
        kernel (str): Kernel type ('rbf' or 'linear'). Defaults to "rbf".
        sigma (Optional[float]): RBF kernel bandwidth. If None, computed using
                                 the median heuristic. Defaults to None.

    Returns:
        float: Unbiased MMD squared statistic.

    Raises:
        ValueError: If m < 2 or n < 2 (cannot compute unbiased statistic).
        ValueError: If an invalid kernel type is provided.
    """
    m = x.shape[0]
    n = y.shape[0]

    if m < 2 or n < 2:
        raise ValueError(f"Need at least 2 samples in each distribution to compute unbiased MMD (got m={m}, n={n})")

    if kernel == "rbf":
        if sigma is None:
            sigma = _compute_sigma_median_heuristic(x, y)
        gamma = 1.0 / (2 * sigma**2)
        # Compute kernel matrices
        k_xx = np.exp(-gamma * pairwise_distances(x, x, metric="euclidean")**2)
        k_yy = np.exp(-gamma * pairwise_distances(y, y, metric="euclidean")**2)
        k_xy = np.exp(-gamma * pairwise_distances(x, y, metric="euclidean")**2)
    elif kernel == "linear":
        k_xx = x @ x.T
        k_yy = y @ y.T
        k_xy = x @ y.T
    else:
        raise ValueError(f"Invalid kernel type: {kernel}. Choose 'rbf' or 'linear'.")

    # Compute unbiased MMD^2 statistic
    term1 = np.sum(k_xx[~np.eye(m, dtype=bool)]) / (m * (m - 1)) if m > 1 else 0
    term2 = np.sum(k_yy[~np.eye(n, dtype=bool)]) / (n * (n - 1)) if n > 1 else 0
    term3 = np.sum(k_xy) / (m * n) if m > 0 and n > 0 else 0

    mmd2 = term1 + term2 - 2 * term3
    return mmd2

# --- Permutation Test Function ---
def permutation_test(x: np.ndarray, y: np.ndarray, P: int, kernel: str = "rbf", sigma: Optional[float] = None, alpha: float = 0.01, n_jobs: int = -1) -> Tuple[float, bool, float, float]:
    """
    Performs the permutation-based hypothesis test for MMD (H0: P=Q).

    Args:
        x (np.ndarray): First sample (m x d).
        y (np.ndarray): Second sample (n x d).
        P (int): Number of permutation iterations.
        kernel (str): Kernel type ('rbf' or 'linear'). Defaults to "rbf".
        sigma (Optional[float]): RBF kernel bandwidth. If None, computed using
                                 the median heuristic ONCE on the original combined sample.
                                 Defaults to None.
        alpha (float): Significance level. Defaults to 0.01.
        n_jobs (int): Number of parallel jobs for permutation (-1 uses all cores).
                      Defaults to -1.

    Returns:
        Tuple[float, bool, float, float]:
            p_value (float): The estimated permutation-based p-value.
            reject_null (bool): True if the null hypothesis is rejected (p < alpha).
            lower_bound (float): Lower quantile (alpha/2) of the permutation-based MMD distribution under H0.
            upper_bound (float): Upper quantile (1 - alpha/2) of the permutation-based MMD distribution under H0.
    """
    m = x.shape[0]
    n = y.shape[0]

    # Check minimum sample size for permutation test
    if m < 2 or n < 2:
        print(f"Warning: Permutation test requires at least 2 samples per group (got m={m}, n={n}). Returning NaN p-value.")
        return np.nan, False, np.nan, np.nan

    # Combine samples for resampling under H0
    z = np.concatenate([x, y], axis=0)
    num_total = z.shape[0]

    # Compute the observed MMD statistic on original samples
    # Compute sigma once if needed (using original data)
    if kernel == "rbf" and sigma is None:
        sigma = _compute_sigma_median_heuristic(x, y)

    delta_obs = mmd_squared_unbiased(x, y, kernel, sigma)

    # --- Permutation Resampling
    # Precompute P random permutations for efficiency
    perms = [np.random.permutation(num_total) for _ in range(P)]

    # Define a helper function for a single permutation iteration
    def _permutation_iteration(perm_indices: np.ndarray) -> float:
        # Apply precomputed permutation to the combined data
        z_shuffled = z[perm_indices]
        # Split into permutation samples
        x_p = z_shuffled[:m] # Use x_p, y_p for permutation samples
        y_p = z_shuffled[m:]
        # Compute MMD on the permutation sample (using the pre-calculated sigma if RBF)
        try:
            # Ensure permutation samples also meet minimum size
            if x_p.shape[0] < 2 or y_p.shape[0] < 2:
                 return np.nan
            return mmd_squared_unbiased(x_p, y_p, kernel, sigma)
        except ValueError:
            # Catch potential errors from mmd_squared_unbiased
            return np.nan

    # Run permutation iterations in parallel
    permutation_stats = Parallel(n_jobs=n_jobs, prefer="processes")( # Use n_jobs parameter
        delayed(_permutation_iteration)(perm) for perm in perms
    )
    permutation_stats = np.array(permutation_stats)
    # Filter out potential NaNs if error handling occurred
    permutation_stats = permutation_stats[~np.isnan(permutation_stats)]

    if len(permutation_stats) == 0:
         print("Warning: All permutation iterations failed.")
         return 1.0, False, np.nan, np.nan

    # Compute p-value: proportion of permutation stats >= observed stat
    p_value = np.mean(permutation_stats >= delta_obs)
    reject_null = (p_value < alpha)

    # Compute confidence interval bounds from the permutation distribution
    lower_bound = np.quantile(permutation_stats, alpha / 2)
    upper_bound = np.quantile(permutation_stats, 1 - alpha / 2)

    return p_value, reject_null, lower_bound, upper_bound

# -------------------------------------------------
# End of Section 1
# -------------------------------------------------

\end{minted}

\begin{minted}[
  linenos=true,
  breaklines=true,
  breakanywhere=true,
  fontsize=\footnotesize,
  baselinestretch=1.2,
  numbersep=5pt,
  bgcolor={gray!10},
  frame=single
]{python}

# ==================================================
# Section 2: MNIST Validation Study Functions
# ==================================================

# --- MNIST Data Handling ---
def mnist_load_and_prepare_data() -> Tuple[Tuple[np.ndarray, np.ndarray], Tuple[np.ndarray, np.ndarray], Tuple[np.ndarray, np.ndarray]]:
    """
    Downloads, preprocesses, and splits the MNIST dataset.
    Includes splitting into training and validation sets.

    Returns:
        Tuple[Tuple[np.ndarray, np.ndarray], Tuple[np.ndarray, np.ndarray], Tuple[np.ndarray, np.ndarray]]:
            A tuple containing (train_data, val_data, test_data), where each
            _data tuple is (images, labels). Labels are one-hot encoded.
    """
    (x_train_full, y_train_full), (x_test, y_test) = keras.datasets.mnist.load_data()
    print(f"[MNIST Data] Initial shapes: Train=({x_train_full.shape}, {y_train_full.shape}), Test=({x_test.shape}, {y_test.shape})")

    # Normalize pixel values to [0, 1]
    x_train_full = x_train_full.astype("float32") / 255.0
    x_test = x_test.astype("float32") / 255.0

    # Add channel dimension (required for CNNs)
    x_train_full = np.expand_dims(x_train_full, -1)
    x_test = np.expand_dims(x_test, -1)

    # Convert labels to one-hot encoding
    num_classes = 10
    y_train_full_cat = keras.utils.to_categorical(y_train_full, num_classes)
    y_test_cat = keras.utils.to_categorical(y_test, num_classes)

    # Split full training data into training and validation sets (90%/10%)
    x_train, x_val, y_train_cat, y_val_cat = train_test_split(
        x_train_full, y_train_full_cat,
        test_size=0.1,
        random_state=42,
        stratify=y_train_full_cat # Ensure balanced classes in splits
    )

    print(f"[MNIST Data] Final shapes: Train=({x_train.shape}, {y_train_cat.shape}), Val=({x_val.shape}, {y_val_cat.shape}), Test=({x_test.shape}, {y_test_cat.shape})")

    return (x_train, y_train_cat), (x_val, y_val_cat), (x_test, y_test_cat)

# --- MNIST Model Definition and Training ---
def mnist_build_lenet5_model(input_shape: Tuple[int, int, int] = (28, 28, 1), num_classes: int = 10) -> keras.Model:
    """
    Builds the LeNet-5 model architecture using the Functional API.
    Includes Dropout layers for regularization.

    Args:
        input_shape (Tuple[int, int, int]): Shape of the input images.
                                            Defaults to (28, 28, 1).
        num_classes (int): Number of output classes (digits 0-9).
                           Defaults to 10.

    Returns:
        keras.Model: The compiled LeNet-5 model architecture.
    """
    inputs = keras.Input(shape=input_shape)
    x = keras.layers.Conv2D(6, kernel_size=(5, 5), activation="relu")(inputs)
    x = keras.layers.AveragePooling2D(pool_size=(2, 2))(x)
    x = keras.layers.Conv2D(16, kernel_size=(5, 5), activation="relu")(x)
    x = keras.layers.AveragePooling2D(pool_size=(2, 2))(x)
    x = keras.layers.Flatten()(x)
    x = keras.layers.Dense(120, activation="relu")(x)
    x = keras.layers.Dropout(0.1)(x)
    x = keras.layers.Dense(84, activation="relu", name="embedding_layer")(x)
    x = keras.layers.Dropout(0.1)(x)
    outputs = keras.layers.Dense(num_classes, activation="softmax")(x)
    model = keras.Model(inputs=inputs, outputs=outputs, name="LeNet5")
    model.compile(loss="categorical_crossentropy", optimizer="adam", metrics=["accuracy"])
    return model

def mnist_train_model(model: keras.Model,
                      x_train: np.ndarray, y_train: np.ndarray,
                      x_val: np.ndarray, y_val: np.ndarray,
                      batch_size: int = 64, epochs: int = 100, patience: int = 10,
                      checkpoint_path: str = "mnist_best_lenet5.keras") -> Tuple[keras.Model, History]:
    """
    Trains the LeNet-5 model with data augmentation, early stopping,
    and model checkpointing.

    Uses ImageDataGenerator for basic augmentation on the training set.
    Implements early stopping based on validation loss to prevent overfitting
    and saves the best model weights to the specified checkpoint path.

    Args:
        model (keras.Model): The compiled Keras model to train.
        x_train (np.ndarray): Training image data.
        y_train (np.ndarray): Training labels (one-hot encoded).
        x_val (np.ndarray): Validation image data.
        y_val (np.ndarray): Validation labels (one-hot encoded).
        batch_size (int): Training batch size. Defaults to 64.
        epochs (int): Maximum number of training epochs. Defaults to 100.
        patience (int): Number of epochs with no improvement after which
                        training will be stopped (for early stopping).
                        Defaults to 10.
        checkpoint_path (str): Path to save the best model found during
                               training. Defaults to "mnist_best_lenet5.keras".

    Returns:
        Tuple[keras.Model, History]:
            model (keras.Model): The trained model with the best weights restored.
            history (History): Keras History object containing training/validation
                               loss and metrics per epoch.
    """
    train_datagen = keras.preprocessing.image.ImageDataGenerator(
        rotation_range=10, zoom_range=0.1, width_shift_range=0.1,
        height_shift_range=0.1, fill_mode='nearest'
    )
    train_generator = train_datagen.flow(x_train, y_train, batch_size=batch_size, shuffle=True)
    validation_datagen = keras.preprocessing.image.ImageDataGenerator()
    validation_generator = validation_datagen.flow(x_val, y_val, batch_size=batch_size)
    early_stopping = keras.callbacks.EarlyStopping(
        monitor='val_loss', patience=patience, restore_best_weights=True
    )
    model_checkpoint = keras.callbacks.ModelCheckpoint(
        filepath=checkpoint_path, monitor='val_loss', save_best_only=True,
        save_weights_only=False, mode='min'
    )
    print(f"[MNIST Train] Starting model training (max {epochs} epochs)...")
    history = model.fit(
        train_generator, epochs=epochs, validation_data=validation_generator,
        callbacks=[early_stopping, model_checkpoint], verbose=2
    )
    print("[MNIST Train] Model training finished.")
    # Best weights are restored by EarlyStopping callback
    return model, history

def mnist_evaluate_model(model: keras.Model, x_test: np.ndarray, y_test: np.ndarray) -> Tuple[float, float]:
    """
    Evaluates the trained model on the test set.

    Args:
        model (keras.Model): The trained Keras model.
        x_test (np.ndarray): Test image data.
        y_test (np.ndarray): Test labels (one-hot encoded).

    Returns:
        Tuple[float, float]:
            loss (float): The loss value on the test set.
            accuracy (float): The accuracy score on the test set.
    """
    print("[MNIST Eval] Evaluating model on test data...")
    score = model.evaluate(x_test, y_test, verbose=0)
    print(f"  Test loss: {score[0]:.4f}")
    print(f"  Test accuracy: {score[1]:.4f}")
    return score[0], score[1]

# --- MNIST Embedding Extraction ---
def mnist_extract_embeddings(model: keras.Model, x_data: np.ndarray, layer_name: str = "embedding_layer") -> np.ndarray:
    """
    Extracts embeddings from a specified layer. (Args/Returns descriptions omitted)
    """
    try:
        embedding_model = keras.Model(inputs=model.input, outputs=model.get_layer(layer_name).output)
        print(f"[MNIST Embed] Extracting embeddings from layer '{layer_name}'...")
        # Use predict with appropriate batch size for potentially large data
        embeddings = embedding_model.predict(x_data, batch_size=128)
        print(f"  Extracted {embeddings.shape[0]} embeddings with dimension {embeddings.shape[1]}.")
        return embeddings
    except ValueError:
        print(f"  Error: Layer '{layer_name}' not found in the model.")
        return np.array([])

# --- MNIST MMD Experiment Functions ---
def mnist_compute_rejection_rates(embeddings: np.ndarray,
                                  y_test_cat: np.ndarray,
                                  digit_pairs: list[Tuple[int, int]],
                                  sample_sizes: list[int],
                                  n_trials: int,
                                  P: int, 
                                  alpha: float,
                                  sample_cap: int,
                                  n_jobs: int) -> Dict[Tuple[int, int], List[float]]: 
    """
    Computes MMD test rejection rates using permutation tests.

    Args:
        embeddings (np.ndarray): Embeddings of the test dataset.
        y_test_cat (np.ndarray): One-hot labels of the test dataset.
        digit_pairs (list[Tuple[int, int]]): List of digit pairs to compare.
        sample_sizes (list[int]): List of sample sizes (n) to draw.
        n_trials (int): Number of Monte Carlo trials per sample size per pair.
        P (int): Number of permutation iterations for each MMD test. 
        alpha (float): Significance level for the permutation test. 
        sample_cap (int): Maximum number of samples to draw per distribution.
        n_jobs (int): Number of parallel jobs for permutation tests.

    Returns:
        Dict[Tuple[int, int], List[float]]: Rejection rates per pair and sample size.
    """
    rejection_rates = {}
    y_test_labels = y_test_cat.argmax(axis=1)

    for pair in digit_pairs:
        digit1, digit2 = pair
        print(f"\n[MNIST Rej Rates] Processing pair: {pair}")
        rejection_rates[pair] = []
        x_all = embeddings[y_test_labels == digit1]
        y_all = embeddings[y_test_labels == digit2]

        if x_all.shape[0] < 2 or y_all.shape[0] < 2:
             print(f"  Warning: Insufficient data for pair {pair}. Skipping.")
             rejection_rates[pair] = [np.nan] * len(sample_sizes)
             continue

        for n in sample_sizes:
            rejections = 0
            n1_max = min(n, x_all.shape[0], sample_cap)
            n2_max = min(n, y_all.shape[0], sample_cap)

            if n1_max < 2 or n2_max < 2:
                rejection_rates[pair].append(np.nan)
                continue

            valid_trials = 0
            for _ in range(n_trials):
                x_indices = np.random.choice(x_all.shape[0], size=n1_max, replace=False)
                y_indices = np.random.choice(y_all.shape[0], size=n2_max, replace=False)
                x_sample = x_all[x_indices]
                y_sample = y_all[y_indices]

                try:
                    # Perform permutation test, passing n_jobs
                    p_value, reject_null, _, _ = permutation_test(x_sample, y_sample, P=P, alpha=alpha, n_jobs=n_jobs) # Use P, pass n_jobs
                    if not np.isnan(p_value): # Check if test was successful
                        if reject_null:
                            rejections += 1
                        valid_trials += 1
                except ValueError as e:
                    print(f"    Warning: Permutation test failed for trial (n={n}, pair={pair}): {e}")

            if valid_trials > 0:
                 rate = rejections / valid_trials
                 rejection_rates[pair].append(rate)
            else:
                 rejection_rates[pair].append(np.nan)
        print(f"  Finished pair {pair}.")

    return rejection_rates


def mnist_compute_mmd_matrix(embeddings: np.ndarray,
                             y_test_cat: np.ndarray,
                             P: int, 
                             alpha: float,
                             sample_cap: int,
                             n_jobs: int) -> Tuple[np.ndarray, np.ndarray]: 
    """
    Computes the 10x10 MMD matrix and p-value matrix using permutation tests.

    Args:
        embeddings (np.ndarray): Embeddings of the test dataset.
        y_test_cat (np.ndarray): One-hot labels of the test dataset.
        P (int): Number of permutation iterations for significance testing. 
        alpha (float): Significance level for the permutation test. 
        sample_cap (int): Maximum number of samples to draw per distribution.
        n_jobs (int): Number of parallel jobs for permutation tests.

    Returns:
        Tuple[np.ndarray, np.ndarray]: MMD matrix and p-value matrix.
    """
    num_classes = 10
    mmd_matrix = np.full((num_classes, num_classes), np.nan)
    p_value_matrix = np.full((num_classes, num_classes), np.nan)
    y_test_labels = y_test_cat.argmax(axis=1)

    print("\n[MNIST MMD Matrix] Computing MMD for all digit pairs...")

    for i in range(num_classes):
        for j in range(i, num_classes):
            x_all = embeddings[y_test_labels == i]
            y_all = embeddings[y_test_labels == j]
            n1_avail, n2_avail = x_all.shape[0], y_all.shape[0]
            n1_capped, n2_capped = min(sample_cap, n1_avail), min(sample_cap, n2_avail)

            if i == j: # Diagonal (Negative Control)
                if n1_avail < 4: continue
                n_diag = min(sample_cap, n1_avail // 2)
                if n_diag < 2: continue
                indices = np.random.choice(n1_avail, size=2 * n_diag, replace=False)
                x_sample, y_sample = x_all[indices[:n_diag]], x_all[indices[n_diag:]]
            else: # Off-diagonal
                if n1_capped < 2 or n2_capped < 2: continue
                x_sample = x_all[np.random.choice(n1_avail, size=n1_capped, replace=False)]
                y_sample = y_all[np.random.choice(n2_avail, size=n2_capped, replace=False)]

            try:
                mmd_val = mmd_squared_unbiased(x_sample, y_sample)
                # Use permutation test, passing n_jobs
                p_value, _, _, _ = permutation_test(x_sample, y_sample, P=P, alpha=alpha, n_jobs=n_jobs) # Use P, pass n_jobs

                mmd_matrix[i, j] = mmd_val
                p_value_matrix[i, j] = p_value
                if i != j:
                    mmd_matrix[j, i] = mmd_val
                    p_value_matrix[j, i] = p_value
            except ValueError as e:
                 print(f"    Error computing MMD/Permutation Test for {i} vs {j}: {e}")
        print(f"  Finished comparisons for digit {i}.")

    print("[MNIST MMD Matrix] Computation finished.")
    return mmd_matrix, p_value_matrix


# --- MNIST Plotting Functions ---
def mnist_plot_rejection_rates(rejection_rates: Dict[Tuple[int, int], List[float]],
                               sample_sizes: list[int],
                               alpha: float,
                               filename: str = "mnist_rejection_rate_plot.png"):
    """ Plots rejection rates vs sample size for MNIST pairs. """
    plt.figure(figsize=(10, 6))
    plotted_something = False
    if isinstance(rejection_rates, dict):
        for pair, rates in rejection_rates.items():
            if isinstance(rates, list):
                valid_indices = ~np.isnan(rates)
                if np.any(valid_indices):
                    plt.plot(np.array(sample_sizes)[valid_indices], np.array(rates)[valid_indices],
                             marker='o', linestyle='-', linewidth=1.5, markersize=5, label=f"{pair[0]} vs. {pair[1]}")
                    plotted_something = True

    if not plotted_something:
        print("[MNIST Plot] No valid rejection rate data to plot.")
        plt.close()
        return

    plt.xlabel("Sample Size per Class (n)", fontsize=12)
    plt.ylabel(f"Rejection Rate (alpha={alpha:.2f})", fontsize=12) # Use alpha symbol
    plt.title("MNIST: MMD Test Rejection Rate vs. Sample Size", fontsize=14)
    plt.xscale("log")
    plt.xticks(sample_sizes, labels=sample_sizes)
    plt.minorticks_off()
    plt.ylim([-0.05, 1.05])
    plt.axhline(y=0.95, color='r', linestyle='--', linewidth=1, label="0.95 Threshold")
    plt.legend(fontsize=9, loc='center right', bbox_to_anchor=(1.25, 0.5))
    plt.grid(True, which='major', linestyle='--', linewidth=0.5)
    plt.tight_layout(rect=[0, 0, 1, 1])
    plt.savefig(filename, dpi=300)
    print(f"[MNIST Plot] Saved rejection rate plot to {filename}")
    plt.close() # Close figure after saving

def mnist_plot_mmd_heatmap(mmd_matrix: np.ndarray,
                           p_value_matrix: np.ndarray,
                           alpha: float,
                           filename: str = "mnist_mmd_heatmap.png"):
    """ Plots MMD heatmap with significance markers for MNIST digits. """
    plt.figure(figsize=(8, 7))
    mask = np.isnan(mmd_matrix)
    ax = sns.heatmap(mmd_matrix, annot=True, fmt=".3f", cmap="viridis", mask=mask,
                     linewidths=.5, linecolor='lightgray',
                     cbar_kws={'label': 'MMD Statistic'}, annot_kws={"size": 9})
    plt.title("MNIST: Pairwise MMD Statistics Between Digits", fontsize=14)
    plt.xlabel("Digit Class", fontsize=12)
    plt.ylabel("Digit Class", fontsize=12)
    plt.xticks(np.arange(10) + 0.5, np.arange(10))
    plt.yticks(np.arange(10) + 0.5, np.arange(10), rotation=0)
    # Mark significant differences (using an offset)
    x_offset = 0.85 
    y_offset = 0.40 
    for i in range(mmd_matrix.shape[0]):
        for j in range(mmd_matrix.shape[1]):
            # Check p-value validity and significance
            if not np.isnan(p_value_matrix[i, j]) and p_value_matrix[i, j] < alpha:
                 # Add asterisk only if MMD value is also not NaN
                 if not mask[i,j]:
                    # Use the new offsets
                    plt.text(j + x_offset, i + y_offset, '*',
                             ha='center', va='center', # Keep alignment centered on the new coords
                             color='white', fontsize=16, weight='bold')
    plt.tight_layout()
    plt.savefig(filename, dpi=300)
    print(f"[MNIST Plot] Saved MMD heatmap to {filename}")
    plt.close() # Close figure after saving

# (Optional plotting functions mnist_plot_mmd_histogram, mnist_plot_pvalue_histogram omitted for brevity)

def mnist_print_summary_statistics(mmd_matrix: np.ndarray, p_value_matrix: np.ndarray, alpha: float):
    """ Prints summary statistics table for MNIST MMD results. """
    num_classes = mmd_matrix.shape[0]
    print("\n--- [MNIST Summary] MMD Results ---")
    print("Comparison | MMD Value | p-value | Significant?")
    print("-----------|-----------|---------|-------------")
    # Diagonal (Negative Controls)
    print("Negative Controls (Digit vs Self):")
    for i in range(num_classes):
        mmd_val, p_val = mmd_matrix[i, i], p_value_matrix[i, i]
        if np.isnan(mmd_val) or np.isnan(p_val): sig_flag, mmd_str, p_str = "N/A", "  N/A    ", " N/A   "
        else: sig_flag, mmd_str, p_str = ("Yes" if p_val < alpha else "No"), f"{mmd_val:9.4f}", f"{p_val:7.4f}"
        print(f"  {i} vs {i}   |{mmd_str} |{p_str} | {sig_flag:<11}")
    # Off-Diagonal
    print("\nPairwise Comparisons (Digit i vs j):")
    off_diag_mmd, off_diag_p = [], []
    for i in range(num_classes):
        for j in range(i + 1, num_classes):
            mmd_val, p_val = mmd_matrix[i, j], p_value_matrix[i, j]
            if np.isnan(mmd_val) or np.isnan(p_val): sig_flag, mmd_str, p_str = "N/A", "  N/A    ", " N/A   "
            else:
                 sig_flag, mmd_str, p_str = ("Yes" if p_val < alpha else "No"), f"{mmd_val:9.4f}", f"{p_val:7.4f}"
                 off_diag_mmd.append(mmd_val); off_diag_p.append(p_val)
            print(f"  {i} vs {j}   |{mmd_str} |{p_str} | {sig_flag:<11}")
    # Off-diagonal summary stats
    if off_diag_mmd:
        print("\nOff-Diagonal Summary Statistics:")
        print(f"  MMD: Mean={np.mean(off_diag_mmd):.4f}, Median={np.median(off_diag_mmd):.4f}, Std={np.std(off_diag_mmd):.4f}")
        print(f"  p-value: Mean={np.mean(off_diag_p):.4f}, Median={np.median(off_diag_p):.4f}, Std={np.std(off_diag_p):.4f}")
    print("------------------------------------")

# -------------------------------------------------
# End of Section 2
# -------------------------------------------------

\end{minted}

\begin{minted}[
  linenos=true,
  breaklines=true,
  breakanywhere=true,
  fontsize=\footnotesize,
  baselinestretch=1.2,
  numbersep=5pt,
  bgcolor={gray!10},
  frame=single
]{python}

# ==================================================
# Section 3: AI Art Study Functions
# ==================================================

# --- AI Art Data Handling ---
def art_load_dataset(root_dir: str,
                     split: str = 'test',
                     max_images_per_category: Optional[int] = 3000,
                     categories_map: dict = {'Human': ['Human'], 'AI_SD': ['AI_SD'], 'AI_LD': ['AI_LD']}) -> Tuple[List[Image.Image], List[str], List[str]]:
    """
    Loads images from an AI-ArtBench-like directory structure.

    Recursively searches for images within subdirectories of the specified
    `root_dir`/`split` path. Assigns images to target categories based on
    whether their parent folder name matches or starts with the names provided
    in the `categories_map`. Randomly samples up to `max_images_per_category`
    from each target category.

    Args:
        root_dir (str): The root directory containing the dataset splits
                        (e.g., 'train', 'test').
        split (str): The dataset split to load (e.g., 'test'). Defaults to 'test'.
        max_images_per_category (Optional[int]): Maximum number of images to load
                                                 per target category. If None,
                                                 loads all found images.
                                                 Defaults to 3000.
        categories_map (dict): A dictionary mapping target category names (keys)
                               to lists of source folder names or prefixes (values).
                               Example: {'Human': ['realism', 'impressionism'], 'AI': ['AI_generated']}
                               Defaults to a basic mapping for the paper's structure.

    Returns:
        Tuple[List[Image.Image], List[str], List[str]]:
            images (List[Image.Image]): A list of loaded PIL Image objects (RGB).
            categories (List[str]): A list of corresponding target category labels
                                    (from `categories_map` keys).
            original_classes (List[str]): A list of the original folder names
                                          from which images were loaded.
            Returns empty lists if the specified directory is not found or no
            images are loaded.
    """
    split_dir = os.path.join(root_dir, split)
    files_by_target_category = {cat: [] for cat in categories_map.keys()}
    all_files = []
    print(f"[AI Art Data] Searching for images in: {split_dir}")
    if not os.path.isdir(split_dir):
        print(f"  Error: Directory '{split_dir}' not found.")
        return [], [], []
    for ext in ['*.jpg', '*.jpeg', '*.png']:
        search_pattern = os.path.join(split_dir, '**', ext)
        all_files.extend(glob.glob(search_pattern, recursive=True))
    print(f"[AI Art Data] Found {len(all_files)} potential image files.")

    assigned_count = 0
    unassigned_folders = set()
    for file_path in all_files:
        folder_name = os.path.basename(os.path.dirname(file_path))
        assigned = False
        for target_cat, source_folders in categories_map.items():
            # Allow matching full folder names or prefixes
            if any(folder_name == src or folder_name.startswith(src) for src in source_folders):
                 files_by_target_category[target_cat].append(file_path)
                 assigned = True
                 assigned_count += 1
                 break # Assign to first matching category
        if not assigned and folder_name: unassigned_folders.add(folder_name)

    if assigned_count < len(all_files) and len(unassigned_folders) > 0:
         print(f"  Warning: {len(all_files) - assigned_count} files were not assigned to any target category.")
         print(f"  Folders containing unassigned files included: {sorted(list(unassigned_folders))[:10]}...")

    images, categories, original_classes = [], [], []
    print("[AI Art Data] Loading and sampling images...")
    for category, files in files_by_target_category.items():
        if not files:
            print(f"  Category '{category}': Found 0 images matching criteria.")
            continue
        random.shuffle(files)
        num_to_load = min(len(files), max_images_per_category) if max_images_per_category is not None else len(files)
        print(f"  Category '{category}': Found {len(files)} images, loading {num_to_load}.")
        loaded_count = 0
        for file_path in tqdm(files[:num_to_load], desc=f"Loading {category}", unit="img"):
            try:
                img = Image.open(file_path).convert("RGB")
                images.append(img); categories.append(category); original_classes.append(os.path.basename(os.path.dirname(file_path)))
                loaded_count += 1
            except Exception as e: print(f"\n    Error loading {file_path}: {e}") # Newline for tqdm

    print(f"\n[AI Art Data] Final loaded counts per category:")
    final_counts = {cat: categories.count(cat) for cat in categories_map.keys()}
    for cat, count in final_counts.items(): print(f"  {cat}: {count}")
    print("-" * 20)
    if any(count == 0 for count in final_counts.values()): print("  Warning: One or more categories have zero loaded images.")
    return images, categories, original_classes

# --- AI Art Embedding Extraction ---
def art_extract_clip_embeddings(images: List[Image.Image],
                                model_clip: torch.nn.Module,
                                preprocess: callable,
                                device: str,
                                batch_size: int = 64) -> np.ndarray:
    """
    Extracts normalized CLIP image embeddings for a list of PIL images.

    Processes images in batches, encodes them using the provided CLIP model's
    image encoder, normalizes the resulting embeddings to unit length, and
    returns them as a NumPy array.

    Args:
        images (List[Image.Image]): A list of PIL Image objects to embed.
        model_clip (torch.nn.Module): The loaded OpenCLIP model.
        preprocess (callable): The preprocessing function associated with the
                               CLIP model.
        device (str): The device to run the model on ('cpu', 'cuda', 'mps').
        batch_size (int): Number of images to process in each batch.
                          Defaults to 64.

    Returns:
        np.ndarray: A NumPy array of shape (n_images, embedding_dim) containing
                    the normalized CLIP embeddings. Returns an empty array if
                    the input list `images` is empty.
    """
    if not images: return np.array([])
    all_embeddings = []
    model_clip.eval()
    print(f"[AI Art Embed] Extracting embeddings using CLIP on device '{device}'...")
    with torch.no_grad():
        for i in tqdm(range(0, len(images), batch_size), desc="CLIP Embedding Batches"):
            batch_images = images[i:i+batch_size]
            image_input = torch.stack([preprocess(img) for img in batch_images]).to(device)
            embeddings = model_clip.encode_image(image_input)
            embeddings /= embeddings.norm(dim=-1, keepdim=True) # Normalize
            all_embeddings.append(embeddings.cpu().numpy())
    all_embeddings = np.concatenate(all_embeddings, axis=0)
    print(f"  Extracted {all_embeddings.shape[0]} embeddings with dimension {all_embeddings.shape[1]}.")
    return all_embeddings

# --- AI Art MMD Experiment Functions ---
def art_compute_mmd_matrix(embeddings: np.ndarray,
                           categories: List[str],
                           unique_categories: List[str],
                           P: int, 
                           alpha: float,
                           sample_cap: int,
                           n_jobs: int) -> Tuple[np.ndarray, np.ndarray]: 
    """
    Computes the pairwise MMD matrix and p-value matrix using permutation tests.

    Args:
        embeddings (np.ndarray): CLIP embeddings for all images.
        categories (List[str]): Category label for each embedding.
        unique_categories (List[str]): Ordered list of unique category names.
        P (int): Number of permutation iterations for significance testing. 
        alpha (float): Significance level for the permutation test. 
        sample_cap (int): Maximum number of samples to draw per category.
        n_jobs (int): Number of parallel jobs for permutation tests.

    Returns:
        Tuple[np.ndarray, np.ndarray]: MMD matrix and p-value matrix.
    """
    num_categories = len(unique_categories)
    mmd_matrix = np.full((num_categories, num_categories), np.nan)
    p_value_matrix = np.full((num_categories, num_categories), np.nan)
    if embeddings.size == 0 or not categories:
         print("[AI Art MMD Matrix] Error: No embeddings or categories provided.")
         return mmd_matrix, p_value_matrix
    categories_array = np.array(categories)
    print("\n[AI Art MMD Matrix] Computing MMD for all category pairs...")

    for i in range(num_categories):
        cat1 = unique_categories[i]
        for j in range(i, num_categories):
            cat2 = unique_categories[j]
            x_all = embeddings[categories_array == cat1]
            y_all = embeddings[categories_array == cat2]
            n1_avail, n2_avail = x_all.shape[0], y_all.shape[0]
            n1_capped, n2_capped = min(sample_cap, n1_avail), min(sample_cap, n2_avail)

            if i == j: # Diagonal
                if n1_avail < 4: continue
                n_diag = min(sample_cap, n1_avail // 2)
                if n_diag < 2: continue
                indices = np.random.choice(n1_avail, size=2 * n_diag, replace=False)
                x_sample, y_sample = x_all[indices[:n_diag]], x_all[indices[n_diag:]]
            else: # Off-diagonal
                if n1_capped < 2 or n2_capped < 2: continue
                x_sample = x_all[np.random.choice(n1_avail, size=n1_capped, replace=False)]
                y_sample = y_all[np.random.choice(n2_avail, size=n2_capped, replace=False)]

            try:
                mmd_val = mmd_squared_unbiased(x_sample, y_sample)
                # Use permutation test, passing n_jobs
                p_value, _, _, _ = permutation_test(x_sample, y_sample, P=P, alpha=alpha, n_jobs=n_jobs) # Use P, pass n_jobs

                mmd_matrix[i, j] = mmd_val
                p_value_matrix[i, j] = p_value
                if i != j:
                    mmd_matrix[j, i] = mmd_val
                    p_value_matrix[j, i] = p_value
            except ValueError as e:
                 print(f"    Error computing MMD/Permutation Test for {cat1} vs {cat2}: {e}")
        print(f"  Finished comparisons for category '{cat1}'.")

    print("[AI Art MMD Matrix] Computation finished.")
    return mmd_matrix, p_value_matrix


def art_compute_rejection_rates(embeddings: np.ndarray,
                                categories: List[str],
                                unique_categories: List[str],
                                sample_sizes: list[int],
                                n_trials: int,
                                P: int, 
                                alpha: float,
                                sample_cap: int,
                                n_jobs: int) -> Dict[Tuple[str, str], List[float]]: 
    """
    Computes MMD test rejection rates using permutation tests.

    Args:
        embeddings (np.ndarray): CLIP embeddings for all images.
        categories (List[str]): Category label for each embedding.
        unique_categories (List[str]): Ordered list of unique category names.
        sample_sizes (list[int]): List of sample sizes (n) to draw.
        n_trials (int): Number of Monte Carlo trials per sample size per pair.
        P (int): Number of permutation iterations for each MMD test. 
        alpha (float): Significance level for the permutation test. 
        sample_cap (int): Maximum number of samples to draw per category.
        n_jobs (int): Number of parallel jobs for permutation tests.

    Returns:
        Dict[Tuple[str, str], List[float]]: Rejection rates per pair and sample size.
    """
    rejection_rates = {}
    if embeddings.size == 0 or not categories:
         print("[AI Art Rej Rates] Error: No embeddings or categories provided.")
         # Initialize with NaNs
         for i in range(len(unique_categories)):
             for j in range(i + 1, len(unique_categories)):
                 pair = tuple(sorted((unique_categories[i], unique_categories[j])))
                 rejection_rates[pair] = [np.nan] * len(sample_sizes)
         return rejection_rates

    categories_array = np.array(categories)
    num_categories = len(unique_categories)

    for i in range(num_categories):
        cat1 = unique_categories[i]
        for j in range(i + 1, num_categories):
            cat2 = unique_categories[j]
            pair = (cat1, cat2) # Keep order for processing
            print(f"\n[AI Art Rej Rates] Processing pair: {pair}")
            rejection_rates[pair] = []
            x_all = embeddings[categories_array == cat1]
            y_all = embeddings[categories_array == cat2]

            if x_all.shape[0] < 2 or y_all.shape[0] < 2:
                 print(f"  Warning: Insufficient data for pair {pair}. Skipping.")
                 rejection_rates[pair] = [np.nan] * len(sample_sizes)
                 continue

            for n in sample_sizes:
                rejections = 0
                n1_max = min(n, x_all.shape[0], sample_cap)
                n2_max = min(n, y_all.shape[0], sample_cap)

                if n1_max < 2 or n2_max < 2:
                    rejection_rates[pair].append(np.nan)
                    continue

                valid_trials = 0
                for _ in range(n_trials):
                    x_indices = np.random.choice(x_all.shape[0], size=n1_max, replace=False)
                    y_indices = np.random.choice(y_all.shape[0], size=n2_max, replace=False)
                    x_sample = x_all[x_indices]
                    y_sample = y_all[y_indices]

                    try:
                        # Perform permutation test, passing n_jobs
                        p_value, reject_null, _, _ = permutation_test(x_sample, y_sample, P=P, alpha=alpha, n_jobs=n_jobs) # Use P, pass n_jobs
                        if not np.isnan(p_value):
                            if reject_null:
                                rejections += 1
                            valid_trials += 1
                    except ValueError as e:
                        print(f"    Warning: Permutation test failed for trial (n={n}, pair={pair}): {e}")

                if valid_trials > 0:
                     rate = rejections / valid_trials
                     rejection_rates[pair].append(rate)
                else:
                     rejection_rates[pair].append(np.nan)
            print(f"  Finished pair {pair}.")

    return rejection_rates

# --- AI Art Plotting Functions ---
def art_plot_mmd_heatmap(mmd_matrix: np.ndarray,
                         p_value_matrix: np.ndarray,
                         unique_categories: List[str],
                         alpha: float,
                         filename: str = "art_mmd_heatmap.png"):
    """ Plots MMD heatmap with significance markers for AI Art categories. """
    num_categories = len(unique_categories)
    if num_categories == 0:
        print("[AI Art Plot] No categories to plot heatmap for.")
        return
    plt.figure(figsize=(max(7, num_categories * 1.5), max(6, num_categories * 1.5)))
    mask = np.isnan(mmd_matrix)
    ax = sns.heatmap(mmd_matrix, annot=True, fmt=".4f", cmap="viridis", mask=mask,
                     xticklabels=unique_categories, yticklabels=unique_categories,
                     linewidths=.5, linecolor='lightgray',
                     cbar_kws={'label': 'MMD Statistic'}, annot_kws={"size": 11})
    ax.set_title("AI Art: Pairwise MMD Statistics Between Categories", fontsize=14)
    ax.set_xlabel("Category", fontsize=12)
    ax.set_ylabel("Category", fontsize=12)
    plt.xticks(rotation=45, ha='right')
    plt.yticks(rotation=0)
    # Mark significant differences 
    for i in range(num_categories):
        for j in range(num_categories):
            if not np.isnan(p_value_matrix[i, j]) and p_value_matrix[i, j] < alpha:
                 if not mask[i,j]:
                    plt.text(j + 0.75, i + 0.5, '*', ha='center', va='center',
                             color='white', fontsize=18, weight='bold') # Adjusted position
    plt.tight_layout()
    plt.savefig(filename, dpi=300)
    print(f"[AI Art Plot] Saved MMD heatmap to {filename}")
    plt.close() # Close figure after saving


def art_plot_rejection_rates(rejection_rates: Dict[Tuple[str, str], List[float]],
                             sample_sizes: list[int],
                             alpha: float,
                             filename: str = "art_rejection_rate.png"):
    """ Plots rejection rates vs sample size for AI Art category pairs. """
    if not rejection_rates:
        print("[AI Art Plot] No rejection rate data to plot.")
        return
    plt.figure(figsize=(10, 6))
    plotted_something = False
    if isinstance(rejection_rates, dict):
        for pair, rates in rejection_rates.items():
            if isinstance(rates, list) and rates:
                 valid_indices = ~np.isnan(rates)
                 if np.any(valid_indices):
                     label = f"{pair[0]} vs. {pair[1]}"
                     plt.plot(np.array(sample_sizes)[valid_indices], np.array(rates)[valid_indices],
                              marker='o', linestyle='-', linewidth=1.5, markersize=5, label=label)
                     plotted_something = True

    if not plotted_something:
        print("[AI Art Plot] No valid rejection rate data found to generate the plot.")
        plt.close()
        return

    plt.xlabel("Sample Size per Category (n)", fontsize=12)
    plt.ylabel(f"Rejection Rate (alpha={alpha:.2f})", fontsize=12) # Use alpha symbol
    plt.title("AI Art: MMD Test Rejection Rate vs. Sample Size", fontsize=14)
    plt.xscale("log")
    plt.xticks(sample_sizes, labels=sample_sizes)
    plt.minorticks_off()
    plt.ylim([-0.05, 1.05])
    plt.axhline(y=0.95, color='r', linestyle='--', linewidth=1, label="0.95 Threshold")
    plt.legend(fontsize=9, loc='center right', bbox_to_anchor=(1.25, 0.5))
    plt.grid(True, which='major', linestyle='--', linewidth=0.5)
    plt.tight_layout(rect=[0, 0, 1, 1])
    plt.savefig(filename, dpi=300)
    print(f"[AI Art Plot] Saved rejection rate plot to {filename}")
    plt.close() # Close figure after saving


def art_print_summary_statistics(mmd_matrix: np.ndarray,
                                 p_value_matrix: np.ndarray,
                                 unique_categories: List[str],
                                 alpha: float):
    """ Prints summary statistics table for AI Art MMD results. """
    num_categories = len(unique_categories)
    if num_categories == 0:
        print("[AI Art Summary] No categories to summarize.")
        return
    print("\n--- [AI Art Summary] MMD Results ---")
    max_cat_len = max(len(cat) for cat in unique_categories) if unique_categories else 10
    header_fmt = f"{{:<{max_cat_len}}} | {{:<{max_cat_len}}} | {{:<10}} | {{:<7}} | {{:<11}}"
    row_fmt    = f"{{:<{max_cat_len}}} | {{:<{max_cat_len}}} | {{:>9}} | {{:>7}} | {{:<11}}"
    print(header_fmt.format("Category 1", "Category 2", "MMD Value", "p-value", "Significant?"))
    print("-" * (max_cat_len + 3 + max_cat_len + 3 + 10 + 3 + 7 + 3 + 11))

    off_diag_mmd, off_diag_p = [], []
    for i in range(num_categories):
        cat1 = unique_categories[i]
        for j in range(num_categories):
            cat2 = unique_categories[j]
            mmd_val, p_val = mmd_matrix[i, j], p_value_matrix[i, j]
            if np.isnan(mmd_val) or np.isnan(p_val): sig_flag, mmd_str, p_str = "N/A", "N/A", "N/A"
            else:
                 sig_flag, mmd_str, p_str = ("Yes" if p_val < alpha else "No"), f"{mmd_val:.4f}", f"{p_val:.4f}"
                 if i != j: off_diag_mmd.append(mmd_val); off_diag_p.append(p_val)
            print(row_fmt.format(cat1, cat2, mmd_str, p_str, sig_flag))

    # Off-diagonal summary stats (unique pairs)
    if off_diag_mmd:
        unique_off_diag_mmd, unique_off_diag_p = [], []
        seen_pairs = set()
        for r in range(num_categories):
            for c in range(r + 1, num_categories):
                 pair_key = tuple(sorted((unique_categories[r], unique_categories[c])))
                 if pair_key not in seen_pairs:
                     if not np.isnan(mmd_matrix[r,c]) and not np.isnan(p_value_matrix[r,c]):
                          unique_off_diag_mmd.append(mmd_matrix[r,c])
                          unique_off_diag_p.append(p_value_matrix[r,c])
                          seen_pairs.add(pair_key)
        print("\nOff-Diagonal Summary Statistics (Unique Pairs):")
        if unique_off_diag_mmd: print(f"  MMD: Mean={np.mean(unique_off_diag_mmd):.4f}, Median={np.median(unique_off_diag_mmd):.4f}, Std={np.std(unique_off_diag_mmd):.4f}")
        if unique_off_diag_p: print(f"  p-value: Mean={np.mean(unique_off_diag_p):.4f}, Median={np.median(unique_off_diag_p):.4f}, Std={np.std(unique_off_diag_p):.4f}")
        if not unique_off_diag_mmd and not unique_off_diag_p: print("  No valid off-diagonal pairs found.")
    print("-" * (max_cat_len + 3 + max_cat_len + 3 + 10 + 3 + 7 + 3 + 11))

# -------------------------------------------------
# End of Section 3
# -------------------------------------------------

\end{minted}

\begin{minted}[
  linenos=true,
  breaklines=true,
  breakanywhere=true,
  fontsize=\footnotesize,
  baselinestretch=1.2,
  numbersep=5pt,
  bgcolor={gray!10},
  frame=single
]{python}

# ==================================================
# Section 4: Main Execution Block
# ==================================================
if __name__ == "__main__":

    # --- Configuration Parameters ---
    # General
    ALPHA = 0.01 # Significance level
    HEATMAP_SAMPLE_CAP = 400 # Max samples per class/category for heatmap MMD calculation
    REJ_RATE_SAMPLE_CAP = 400 # Max samples per class/category for rejection rate curves
    N_TRIALS_REJ_RATE = 100 # Number of trials for rejection rate curves
    SAMPLE_SIZES = [4, 5, 6, 7, 8, 9, 10, 12, 16, 24] # Sample sizes for rejection rate curves
    N_JOBS = 8 # Number of parallel jobs for permutation tests

    # MNIST Specific
    MNIST_P = 1000 # Permutation iterations for MNIST 
    MNIST_EPOCHS = 100 # Max epochs for LeNet training
    MNIST_PATIENCE = 10 # Early stopping patience
    MNIST_BATCH_SIZE = 64
    MNIST_CHECKPOINT_PATH = "mnist_best_lenet5.keras"
    MNIST_RESULTS_DIR = "mnist_results"
    MNIST_DIGIT_PAIRS_REJ_RATE = [(0, 1), (1, 7), (2, 8), (3, 8), (5, 8), (2, 3), (4, 9), (3, 5), (6, 8)]

    # AI Art Specific
    ART_P = 2500 # Permutation iterations for AI Art 
    ART_DATA_ROOT = "Real_AI_SD_LD_Dataset" # Update this path if needed
    ART_DATA_SPLIT = 'test'
    ART_MAX_IMAGES = 3000 # Max images per category to load
    ART_CLIP_MODEL = 'ViT-H-14-quickgelu'
    ART_CLIP_PRETRAINED = 'dfn5b'
    ART_DEVICE = "mps" if torch.backends.mps.is_available() else ("cuda" if torch.cuda.is_available() else "cpu")
    ART_BATCH_SIZE = 64
    ART_RESULTS_DIR = "art_results"
    ART_CATEGORIES_MAP = { # Define mapping
        'Human': ['art_nouveau', 'baroque', 'expressionism', 'impressionism',
                  'post_impressionism', 'realism', 'renaissance', 'romanticism',
                  'surrealism', 'ukiyo_e'],
        'AI (SD)': ['AI_SD_'],
        'AI (LD)': ['AI_LD_']
    }
    ART_CATEGORIES = ['Human', 'AI (SD)', 'AI (LD)'] # Define order

    # --- Setup Output Directories ---
    os.makedirs(MNIST_RESULTS_DIR, exist_ok=True)
    os.makedirs(ART_RESULTS_DIR, exist_ok=True)

    # ============================================
    # Execute Section 2: MNIST Validation Study
    # ============================================
    print("\n" + "="*44); print("Executing Section 2: MNIST Validation Study"); print("="*44)
    (x_train, y_train_cat), (x_val, y_val_cat), (x_test, y_test_cat) = mnist_load_and_prepare_data()
    if os.path.exists(MNIST_CHECKPOINT_PATH):
        print(f"[MNIST Train] Loading pre-trained model from {MNIST_CHECKPOINT_PATH}")
        mnist_model = keras.models.load_model(MNIST_CHECKPOINT_PATH)
    else:
        print("[MNIST Train] Building new LeNet-5 model...")
        mnist_model = mnist_build_lenet5_model()
        mnist_model, _ = mnist_train_model( # History ignored if not used
            mnist_model, x_train, y_train_cat, x_val, y_val_cat,
            batch_size=MNIST_BATCH_SIZE, epochs=MNIST_EPOCHS, patience=MNIST_PATIENCE,
            checkpoint_path=MNIST_CHECKPOINT_PATH
        )
        mnist_model = keras.models.load_model(MNIST_CHECKPOINT_PATH) # Reload best
    mnist_test_loss, mnist_test_accuracy = mnist_evaluate_model(mnist_model, x_test, y_test_cat)
    mnist_embeddings = mnist_extract_embeddings(mnist_model, x_test)

    # Initialize MNIST result variables to avoid errors in Section 5 if embedding fails
    mnist_rejection_rates = None
    mnist_mmd_matrix = None
    mnist_p_value_matrix = None

    if mnist_embeddings.size > 0:
        np.save(os.path.join(MNIST_RESULTS_DIR, "embeddings.npy"), mnist_embeddings)
        mnist_rejection_rates = mnist_compute_rejection_rates(
            mnist_embeddings, y_test_cat, MNIST_DIGIT_PAIRS_REJ_RATE, SAMPLE_SIZES,
            n_trials=N_TRIALS_REJ_RATE, P=MNIST_P, alpha=ALPHA, sample_cap=REJ_RATE_SAMPLE_CAP, n_jobs=N_JOBS # Use P=MNIST_P, pass N_JOBS
        )
        np.save(os.path.join(MNIST_RESULTS_DIR, "rejection_rates.npy"), mnist_rejection_rates)
        mnist_mmd_matrix, mnist_p_value_matrix = mnist_compute_mmd_matrix(
            mnist_embeddings, y_test_cat, P=MNIST_P, alpha=ALPHA, sample_cap=HEATMAP_SAMPLE_CAP, n_jobs=N_JOBS # Use P=MNIST_P, pass N_JOBS
        )
        np.save(os.path.join(MNIST_RESULTS_DIR, "mmd_matrix.npy"), mnist_mmd_matrix)
        np.save(os.path.join(MNIST_RESULTS_DIR, "pvalue_matrix.npy"), mnist_p_value_matrix) # Consistent file name

        # Generate Plots only if results were computed
        if mnist_rejection_rates is not None:
             mnist_plot_rejection_rates(mnist_rejection_rates, SAMPLE_SIZES, ALPHA,
                                       filename=os.path.join(MNIST_RESULTS_DIR, "rejection_rate_plot.png"))
        if mnist_mmd_matrix is not None and mnist_p_value_matrix is not None:
             mnist_plot_mmd_heatmap(mnist_mmd_matrix, mnist_p_value_matrix, ALPHA,
                                   filename=os.path.join(MNIST_RESULTS_DIR, "mmd_heatmap.png"))
             mnist_print_summary_statistics(mnist_mmd_matrix, mnist_p_value_matrix, ALPHA)

        print(f"\n[MNIST Study] Completed. Plots and results saved to '{MNIST_RESULTS_DIR}' directory.")
    else:
        print("\n[MNIST Study] Skipped MMD analysis due to missing embeddings.")

    # ============================================
    # Execute Section 3: AI Art Study
    # ============================================
    print("\n" + "="*44); print("Executing Section 3: AI Art Study"); print("="*44)
    art_images, art_categories, art_original_classes = art_load_dataset(
        ART_DATA_ROOT, split=ART_DATA_SPLIT,
        max_images_per_category=ART_MAX_IMAGES, categories_map=ART_CATEGORIES_MAP
    )

    # Initialize AI Art result variables
    art_rejection_rates = None
    art_mmd_matrix = None
    art_p_value_matrix = None
    art_embeddings = None # Also initialize embeddings

    if art_images and art_categories:
        print(f"[AI Art Setup] Loading CLIP model '{ART_CLIP_MODEL}' pretrained on '{ART_CLIP_PRETRAINED}'...")
        try:
             model_clip, _, preprocess = open_clip.create_model_and_transforms(
                 ART_CLIP_MODEL, pretrained=ART_CLIP_PRETRAINED, device=ART_DEVICE
             )
             print(f"[AI Art Setup] CLIP model loaded successfully on device '{ART_DEVICE}'.")
        except Exception as e:
             print(f"[AI Art Setup] Error loading CLIP model: {e}")
             model_clip = None

        if model_clip:
            art_embeddings = art_extract_clip_embeddings(
                art_images, model_clip, preprocess, device=ART_DEVICE, batch_size=ART_BATCH_SIZE
            )
            if art_embeddings.size > 0:
                np.save(os.path.join(ART_RESULTS_DIR, "embeddings.npy"), art_embeddings)

                # Proceed only if embeddings were extracted
                art_mmd_matrix, art_p_value_matrix = art_compute_mmd_matrix(
                    art_embeddings, art_categories, ART_CATEGORIES,
                    P=ART_P, alpha=ALPHA, sample_cap=HEATMAP_SAMPLE_CAP, n_jobs=N_JOBS # Use P=ART_P, pass N_JOBS
                )
                np.save(os.path.join(ART_RESULTS_DIR, "mmd_matrix.npy"), art_mmd_matrix)
                np.save(os.path.join(ART_RESULTS_DIR, "pvalue_matrix.npy"), art_p_value_matrix) # Consistent file name

                art_rejection_rates = art_compute_rejection_rates(
                    art_embeddings, art_categories, ART_CATEGORIES, SAMPLE_SIZES,
                    n_trials=N_TRIALS_REJ_RATE, P=ART_P, alpha=ALPHA, sample_cap=REJ_RATE_SAMPLE_CAP, n_jobs=N_JOBS # Use P=ART_P, pass N_JOBS
                )
                np.save(os.path.join(ART_RESULTS_DIR, "rejection_rates.npy"), art_rejection_rates)

                # Generate Plots only if results were computed
                if art_mmd_matrix is not None and art_p_value_matrix is not None:
                     art_plot_mmd_heatmap(art_mmd_matrix, art_p_value_matrix, ART_CATEGORIES, ALPHA,
                                         filename=os.path.join(ART_RESULTS_DIR, "mmd_heatmap.png"))
                     art_print_summary_statistics(art_mmd_matrix, art_p_value_matrix, ART_CATEGORIES, ALPHA)
                if art_rejection_rates is not None:
                     art_plot_rejection_rates(art_rejection_rates, SAMPLE_SIZES, ALPHA,
                                             filename=os.path.join(ART_RESULTS_DIR, "rejection_rate.png"))

                print(f"\n[AI Art Study] Completed. Plots and results saved to '{ART_RESULTS_DIR}' directory.")
            else:
                print("\n[AI Art Study] Skipped MMD analysis due to missing embeddings.")
        else:
            print("\n[AI Art Study] Skipped embedding extraction and MMD analysis due to CLIP model loading failure.")
    else:
        print("\n[AI Art Study] Skipped analysis because no images were loaded.")

    print("\n" + "="*44); print("All studies completed."); print("="*44)

# -------------------------------------------------
# End of Section 4
# -------------------------------------------------

\end{minted}

\begin{minted}[
  linenos=true,
  breaklines=true,
  breakanywhere=true,
  fontsize=\footnotesize,
  baselinestretch=1.2,
  numbersep=5pt,
  bgcolor={gray!10},
  frame=single
]{python}

# ==================================================
# Section 5: Extract Specific Results for Exposition (Both Studies)
# ==================================================
# NOTE: This block should run AFTER ALL analysis in Sections 2 and 3 is complete.

def print_mnist_exposition_summary():
    """
    Prints specific, key summary results from the MNIST MMD analysis for exposition.

    Extracts and prints:
    - Approximate sample size needed to achieve >95% rejection rate for key digit pairs.
    - Range of MMD and p-values for diagonal (negative control) comparisons.
    - Overall significance rate for off-diagonal (distinct digit) comparisons.
    - Range of MMD values for significant off-diagonal pairs.
    - Specific digit pairs corresponding to the minimum and maximum significant MMD values.

    Requires the global variables `mnist_rejection_rates`, `mnist_mmd_matrix`,
    `mnist_p_value_matrix`, `SAMPLE_SIZES`, `ALPHA`, and `MNIST_DIGIT_PAIRS_REJ_RATE`
    to be populated from the main analysis block. Prints warnings if data is missing.
    """
    print("\n" + "="*50)
    print("Extracting Specific MNIST Results for Exposition")
    print("="*50)

    # Check if necessary MNIST variables exist in the global scope and are not None
    required_vars = ['mnist_rejection_rates', 'mnist_mmd_matrix', 'mnist_p_value_matrix', 'SAMPLE_SIZES', 'ALPHA', 'MNIST_DIGIT_PAIRS_REJ_RATE']
    if not all(var in globals() and globals()[var] is not None for var in required_vars):
        print("MNIST results variables not found or are None. Skipping MNIST summary.")
        print("(Ensure MNIST analysis completed successfully and generated results)")
        print("="*50)
        return # Exit this function if variables are missing or None

    # Access global variables (now safe after check)
    g_mnist_rejection_rates = globals()['mnist_rejection_rates']
    g_mnist_mmd_matrix = globals()['mnist_mmd_matrix']
    g_mnist_p_value_matrix = globals()['mnist_p_value_matrix']
    g_SAMPLE_SIZES = globals()['SAMPLE_SIZES']
    g_ALPHA = globals()['ALPHA']
    g_MNIST_DIGIT_PAIRS_REJ_RATE = globals()['MNIST_DIGIT_PAIRS_REJ_RATE']


    # --- 1. MNIST Rejection Rate Thresholds ---
    print("\n--- MNIST: Rejection Rate Thresholds (Approx. Sample Size for >0.95 Rejection) ---")
    target_pairs = g_MNIST_DIGIT_PAIRS_REJ_RATE
    target_threshold = 0.95

    if isinstance(g_mnist_rejection_rates, dict):
        pairs_to_report = [p for p in target_pairs if p in g_mnist_rejection_rates]
        if not pairs_to_report:
             print("No data found for the specified MNIST_DIGIT_PAIRS_REJ_RATE in mnist_rejection_rates.")
        else:
            for pair in pairs_to_report:
                rates = g_mnist_rejection_rates[pair]
                found_threshold = False
                if isinstance(rates, list) and len(rates) == len(g_SAMPLE_SIZES):
                    for i, rate in enumerate(rates):
                        if not np.isnan(rate) and rate > target_threshold:
                            print(f"Pair {pair}: Reached >{target_threshold:.2f} rejection rate at sample size n = {g_SAMPLE_SIZES[i]}")
                            found_threshold = True
                            break
                    if not found_threshold:
                        # Find max rate achieved if threshold not met
                        valid_rates = [r for r in rates if not np.isnan(r)]
                        max_rate_str = f"{np.max(valid_rates):.2f}" if valid_rates else 'N/A'
                        print(f"Pair {pair}: Did not reach >{target_threshold:.2f} rejection rate within tested sample sizes (Max rate: {max_rate_str})")
                else:
                     print(f"Pair {pair}: Data format issue or mismatch with SAMPLE_SIZES.")
    else:
        print("Error: mnist_rejection_rates is not a dictionary.")


    # --- 2. MNIST MMD Heatmap - Diagonal (Negative Controls) ---
    print("\n--- MNIST: MMD Heatmap - Diagonal (Negative Controls) ---")
    if isinstance(g_mnist_mmd_matrix, np.ndarray) and isinstance(g_mnist_p_value_matrix, np.ndarray):
        diag_mmd = np.diag(g_mnist_mmd_matrix)
        diag_p = np.diag(g_mnist_p_value_matrix)
        valid_diag_mmd = diag_mmd[~np.isnan(diag_mmd)]
        valid_diag_p = diag_p[~np.isnan(diag_p)]

        if valid_diag_mmd.size > 0: print(f"MMD Values Range: {np.min(valid_diag_mmd):.4f} to {np.max(valid_diag_mmd):.4f}")
        else: print("MMD Values Range: No valid diagonal MMD values found.")
        if valid_diag_p.size > 0:
            print(f"p-values Range:   {np.min(valid_diag_p):.4f} to {np.max(valid_diag_p):.4f}")
            num_significant = np.sum(valid_diag_p < g_ALPHA)
            print(f"Number of diagonal pairs significant at alpha={g_ALPHA}: {num_significant} (Expected: 0)")
        else: print("p-values Range:   No valid diagonal p-values found.")
    else: print("Error: MNIST MMD or p-value matrix is not a NumPy array.")


    # --- 3. MNIST MMD Heatmap - Off-Diagonal Comparisons ---
    print("\n--- MNIST: MMD Heatmap - Off-Diagonal Comparisons ---")
    if isinstance(g_mnist_mmd_matrix, np.ndarray) and isinstance(g_mnist_p_value_matrix, np.ndarray):
        num_classes = g_mnist_mmd_matrix.shape[0]
        off_diag_mask = ~np.eye(num_classes, dtype=bool)
        off_diag_mmd = g_mnist_mmd_matrix[off_diag_mask]
        off_diag_p = g_mnist_p_value_matrix[off_diag_mask]
        valid_mask = ~np.isnan(off_diag_mmd) & ~np.isnan(off_diag_p)
        valid_off_diag_mmd = off_diag_mmd[valid_mask]
        valid_off_diag_p = off_diag_p[valid_mask]

        if valid_off_diag_p.size > 0:
            num_significant = np.sum(valid_off_diag_p < g_ALPHA)
            num_total_valid = len(valid_off_diag_p)
            print(f"Significance: {num_significant} out of {num_total_valid} valid off-diagonal pairs were significant (p < {g_ALPHA}).")
            print(f"p-values Range (all valid off-diagonal): {np.min(valid_off_diag_p):.4f} to {np.max(valid_off_diag_p):.4f}")

            significant_mask = valid_off_diag_p < g_ALPHA
            significant_mmd = valid_off_diag_mmd[significant_mask]

            if significant_mmd.size > 0:
                min_sig_mmd = np.min(significant_mmd)
                max_sig_mmd = np.max(significant_mmd)
                print(f"MMD Range (significant pairs only): {min_sig_mmd:.4f} to {max_sig_mmd:.4f}")

                # Find pairs corresponding to min/max MMD (handle potential multiple occurrences)
                min_indices = np.where(np.isclose(g_mnist_mmd_matrix, min_sig_mmd))
                max_indices = np.where(np.isclose(g_mnist_mmd_matrix, max_sig_mmd))

                min_pair_str = "N/A"
                if len(min_indices[0]) > 0:
                     # Get unique pairs (i, j) where i < j
                     min_pairs = set(tuple(sorted((min_indices[0][k], min_indices[1][k])))
                                     for k in range(len(min_indices[0])) if min_indices[0][k] < min_indices[1][k])
                     min_pair_str = ", ".join(map(str, min_pairs)) if min_pairs else "N/A"


                max_pair_str = "N/A"
                if len(max_indices[0]) > 0:
                     max_pairs = set(tuple(sorted((max_indices[0][k], max_indices[1][k])))
                                     for k in range(len(max_indices[0])) if max_indices[0][k] < max_indices[1][k])
                     max_pair_str = ", ".join(map(str, max_pairs)) if max_pairs else "N/A"

                print(f"Pair(s) with Minimum Significant MMD: {min_pair_str} (MMD={min_sig_mmd:.4f})")
                print(f"Pair(s) with Maximum Significant MMD: {max_pair_str} (MMD={max_sig_mmd:.4f})")
            else: print("MMD Range (significant pairs only): No significant off-diagonal pairs found.")
        else: print("Significance: No valid off-diagonal pairs found to analyze.")
    else: print("Error: MNIST MMD or p-value matrix is not a NumPy array.")
    print("="*50)


def print_art_exposition_summary():
    """
    Prints specific, key summary results from the AI Art MMD analysis for exposition.

    Extracts and prints:
    - Approximate sample size needed to achieve >95% rejection rate for key category pairs.
    - Range of MMD and p-values for diagonal (negative control) comparisons.
    - Specific MMD and p-values for the crucial off-diagonal comparisons
      (Human vs AI SD, Human vs AI LD, AI SD vs AI LD).
    - Overall significance rate for off-diagonal pairs.

    Requires the global variables `art_rejection_rates`, `art_mmd_matrix`,
    `art_p_value_matrix`, `ART_CATEGORIES`, `SAMPLE_SIZES`, and `ALPHA`
    to be populated from the main analysis block. Prints warnings if data is missing.
    """
    print("\n" + "="*50)
    print("Extracting Specific AI Art Results for Exposition")
    print("="*50)

    # Check if necessary AI Art variables exist in the global scope and are not None
    required_vars = ['art_rejection_rates', 'art_mmd_matrix', 'art_p_value_matrix', 'ART_CATEGORIES', 'SAMPLE_SIZES', 'ALPHA']
    if not all(var in globals() and globals()[var] is not None for var in required_vars):
        print("AI Art results variables not found or are None. Skipping AI Art summary.")
        print("(Ensure AI Art analysis completed successfully and generated results)")
        print("="*50)
        return # Exit this function if variables are missing or None

    # Access global variables (now safe after check)
    g_art_rejection_rates = globals()['art_rejection_rates']
    g_art_mmd_matrix = globals()['art_mmd_matrix']
    g_art_p_value_matrix = globals()['art_p_value_matrix']
    g_ART_CATEGORIES = globals()['ART_CATEGORIES']
    g_SAMPLE_SIZES = globals()['SAMPLE_SIZES']
    g_ALPHA = globals()['ALPHA']

    # --- 1. AI Art Rejection Rate Thresholds ---
    print("\n--- AI Art: Rejection Rate Thresholds (Approx. Sample Size for >0.95 Rejection) ---")
    target_threshold = 0.95
    num_categories = len(g_ART_CATEGORIES)
    pairs_to_report = []
    for i in range(num_categories):
        for j in range(i + 1, num_categories):
            # Use the actual pair order from ART_CATEGORIES for consistency
            pairs_to_report.append((g_ART_CATEGORIES[i], g_ART_CATEGORIES[j]))

    if isinstance(g_art_rejection_rates, dict):
        reported_count = 0
        for pair_key in pairs_to_report:
             # Check if the key exists directly
             if pair_key in g_art_rejection_rates:
                 rates = g_art_rejection_rates[pair_key]
                 found_threshold = False
                 if isinstance(rates, list) and len(rates) == len(g_SAMPLE_SIZES):
                     for i, rate in enumerate(rates):
                         if not np.isnan(rate) and rate > target_threshold:
                             print(f"Pair {pair_key}: Reached >{target_threshold:.2f} rejection rate at sample size n = {g_SAMPLE_SIZES[i]}")
                             found_threshold = True
                             reported_count += 1
                             break
                     if not found_threshold:
                         valid_rates = [r for r in rates if not np.isnan(r)]
                         max_rate_str = f"{np.max(valid_rates):.2f}" if valid_rates else 'N/A'
                         print(f"Pair {pair_key}: Did not reach >{target_threshold:.2f} rejection rate within tested sample sizes (Max rate: {max_rate_str})")
                         reported_count += 1
                 else:
                      print(f"Pair {pair_key}: Data format issue or mismatch with SAMPLE_SIZES.")
             else:
                 print(f"Pair {pair_key}: Data not found in rejection_rates dictionary.")

        if reported_count == 0:
             print("No rejection rate data found for any AI Art pairs.")

    else:
        print("Error: art_rejection_rates is not a dictionary.")


    # --- 2. AI Art MMD Heatmap - Diagonal (Negative Controls) ---
    print("\n--- AI Art: MMD Heatmap - Diagonal (Negative Controls) ---")
    if isinstance(g_art_mmd_matrix, np.ndarray) and isinstance(g_art_p_value_matrix, np.ndarray):
        diag_mmd = np.diag(g_art_mmd_matrix)
        diag_p = np.diag(g_art_p_value_matrix)
        valid_diag_mmd = diag_mmd[~np.isnan(diag_mmd)]
        valid_diag_p = diag_p[~np.isnan(diag_p)]

        if valid_diag_mmd.size > 0: print(f"MMD Values Range: {np.min(valid_diag_mmd):.4f} to {np.max(valid_diag_mmd):.4f}")
        else: print("MMD Values Range: No valid diagonal MMD values found.")
        if valid_diag_p.size > 0:
            print(f"p-values Range:   {np.min(valid_diag_p):.4f} to {np.max(valid_diag_p):.4f}")
            num_significant = np.sum(valid_diag_p < g_ALPHA)
            print(f"Number of diagonal pairs significant at alpha={g_ALPHA}: {num_significant} (Expected: 0)")
        else: print("p-values Range:   No valid diagonal p-values found.")
    else: print("Error: AI Art MMD or p-value matrix is not a NumPy array.")


    # --- 3. AI Art MMD Heatmap - Off-Diagonal Comparisons ---
    print("\n--- AI Art: MMD Heatmap - Off-Diagonal Comparisons ---")
    if isinstance(g_art_mmd_matrix, np.ndarray) and isinstance(g_art_p_value_matrix, np.ndarray) and g_ART_CATEGORIES:
        num_categories = g_art_mmd_matrix.shape[0]
        if num_categories != len(g_ART_CATEGORIES):
             print("Warning: Mismatch between matrix dimension and ART_CATEGORIES length.")
        else:
            print("Specific Pairwise Results:")
            off_diag_count = 0
            significant_count = 0
            for i in range(num_categories):
                for j in range(i + 1, num_categories): # Iterate through unique off-diagonal pairs
                    cat1 = g_ART_CATEGORIES[i]
                    cat2 = g_ART_CATEGORIES[j]
                    mmd_val = g_art_mmd_matrix[i, j]
                    p_val = g_art_p_value_matrix[i, j]

                    if np.isnan(mmd_val) or np.isnan(p_val):
                        sig_flag = "N/A"; mmd_str = "N/A"; p_str = "N/A"
                    else:
                        off_diag_count += 1
                        sig_flag = "Yes" if p_val < g_ALPHA else "No"
                        if p_val < g_ALPHA: significant_count += 1
                        mmd_str = f"{mmd_val:.4f}"
                        p_str = f"{p_val:.4f}"

                    print(f"  {cat1} vs {cat2}: MMD = {mmd_str}, p-value = {p_str}, Significant? {sig_flag}")

            print(f"\nSummary: {significant_count} out of {off_diag_count} valid off-diagonal pairs were significant (p < {g_ALPHA}).")
    else: print("Error: AI Art MMD/p-value matrix is not a NumPy array or ART_CATEGORIES is missing.")
    print("="*50)

# --- Main Call to Print Summaries ---
print_mnist_exposition_summary()
print_art_exposition_summary()

print("\n" + "="*50)
print("Exposition Summary Extraction Complete for Both Studies")
print("="*50)

# -------------------------------------------------
# End of Section 5
# -------------------------------------------------

\end{minted}

\end{document}